\crefname{diagram}{Diagram}{Diagram}
\numberwithin{equation}{section}
\theoremstyle{plain}
\newtheorem{theorem}{Theorem}[section]
\newtheorem{corollary}[theorem]{Corollary}
\newtheorem{lemma}[theorem]{Lemma}
\newtheorem{proposition}[theorem]{Proposition}
\newtheorem*{thm*}{Theorem}
\theoremstyle{definition}
\newtheorem{remark}[theorem]{Remark}
\newtheorem{defn}[theorem]{Definition}
\newtheorem{example}[theorem]{Example}
\newcommand\I{\iota}
\newcommand\C{\mathbb{C}}
\newcommand\N{\mathbb{N}}
\newcommand\bV{\mathbb{V}}
\newcommand\bW{\mathbb{W}}
\newcommand\R{\mathbb{R}}
\newcommand\Z{\mathbb{Z}}
\DeclareMathOperator\Sym{Sym}
\DeclareMathOperator\Id{Id}
\DeclareMathOperator\Tr{Tr}
\DeclareMathOperator\GL{GL}
\DeclareMathOperator\sym{sym}
\DeclareMathOperator\diag{diag}
\DeclareMathOperator\End{End}
\DeclareMathOperator\band{band}
\DeclareMathOperator{\Hom}{Hom}
\DeclareMathOperator{\SL}{SL}
\DeclareMathOperator{\Sp}{Sp}
\DeclareMathOperator{\SO}{SO}
\DeclareMathOperator{\SU}{SU}
\DeclareMathOperator{\Spin}{Spin}
\title{Orbit recovery for band-limited functions}
\author{Dan Edidin, Matthew Satriano}
\address{Department of Mathematics, University of Missouri, Columbia MO 65211}
\email{edidind@missouri.edu}
\address{University of Waterloo, Department of Pure Mathematics, Waterloo, Ontario, Canada N2L 3G1}
\email{msatrian@uwaterloo.ca}
\begin{document}
\begin{abstract}
  We study the third moment for functions on arbitrary compact Lie
  groups.  We use techniques of representation theory to generalize
  the notion of band-limited functions in classical Fourier theory to
  functions on the compact groups $\SU(n), \SO(n), \Sp(n)$. We
  then prove that for
  generic band-limited functions the third moment or, its Fourier
  equivalent, the bispectrum determines the function up to translation by a
  single unitary matrix.
  Moreover, 
  if $G=\SU(n)$ or $G=\SO(2n+1)$ we prove that the third moment determines
  the $G$-orbit of a band-limited function. As a corollary we obtain a
  large class of finite-dimensional representations of these groups for which the third
  moment determines the orbit of a generic vector. When $G=\SO(3)$ this
  gives a result relevant to cryo-EM which was our original motivation
  for studying this problem.
\end{abstract}
% REQUIRED
%\begin{keywords}
%Orbit recovery, bispectrum
%\end{keywords}

% REQUIRED
%\begin{MSCcodes}
%94A12, 22D10
%\end{MSCcodes}
\maketitle
\section{Introduction}
Let $G$ be a compact Lie group. The purpose of this paper is to construct a class
of finite-dimensional representations $V$ of $G$ for which the third moment
can determine the orbit of a generic vector $f \in V$.
As we explain, this 
work is motivated by several
applications including multi-reference alignment (MRA), cryo-EM, and machine learning.

In its basic form, the multi-reference alignment (MRA) problem
seeks to recover a signal $f \in V$ from noisy group translates
of the signal
\[ y_i = g_i\cdot f + \epsilon_i\]
where the $g_i$ are randomly selected from a uniform distribution on $G$
and the $\epsilon_i$ are taken from a Gaussian distribution ${\mathcal N}(0, \sigma^2I)$
which is independent of the group element $g_i$.
Without prior knowledge of the group elements, there is no way
to distinguish $f$ from $g\cdot f$ for any $g \in G$. Thus
the MRA problem is one of orbit recovery. The MRA problem has been extensively studied in recent years, beginning with action of $\Z_N$ on $\R^N$ by cyclic
shifts \cite{bandeira2014multireference, perry2019sample, bendory2017bispectrum,abbe2018multireference, bandeira2020optimal}. Other models include the dihedral group \cite{bendory2022dihedral}
and the rotation group $\SO(2)$ acting on band-limited functions on $\R^2$
\cite{bandeira2020non,ma2019heterogeneous,janco2022accelerated}. The case where $G = \SO(3)$ is particularly important
because of its connection to cryo-EM, a leading technique in molecular imaging.
There, the measured data can be modelled as $y_i = T (g \cdot f) + \epsilon$
where $f$ is the Coulomb potential of an unknown molecular structure and
$T$ is a tomographic projection \cite{bendory2020single}.

In the low-noise regime the products $g_ig_j^{-1}$ can be estimated
using the method of synchronization and then the signal can be
approximated by averaging \cite{singer2011angular}. However, in the
high-noise regime, as is the case for cryo-EM measurements of small
molecules, there is no way to accurately estimate the relation between
the unknown group elements \cite[Proposition
  2.1]{bendory2018toward}. One common approach to this problem is to
use the {\em method of moments}.  In this case, it can be shown
\cite{perry2019sample} that the moments of the unknown signal can be
accurately approximated by the computing the corresponding moments of
the experimental data. Thus, a crucial aspect of the MRA problem is
understanding how to recover a signal $f \in V$ from its moments which
are, by definition, $G$-invariant tensors in the signal.

In machine learning it is desirable to build neural networks whose
archicture reflects the instrinsic structure of the data. When the
data has natural symmetries under a group $G$ then we want to build
the network from {\em $G$-equivariant functions}. (Recall that if
$\bV$ and $\bW$ are sets with a $G$ action then a function $f \colon
\bV \to \bW$ is $G$-equivariant if $f(gv) = gf(v)$ for all $g \in G$.)
The basic model of an equivariant neural network
\cite{cohen2016group,lim2022equivariant} is a sequence of maps
\[\R^{n_0} \stackrel{A_1} \to \R^{n_1} \stackrel{\sigma_{b_1}} \to
\R^{n_2} \ldots \stackrel{\sigma_{b_{k-1}}} \to \R^{n_{k-1}}
\stackrel{A_k} \to \R^{n_k}\] where each $\R^{n_i}$ is a
representation of $G$, the $A_i$ are $G$-equivariant linear
transformations and the $\sigma_{b_i}$ are non-linear maps.

 One difficulty with this model is that there may be relatively
 few $G$-equivariant linear maps of representations $A_i \colon \R^{n} \to \R^{m}$, so an equivariant neural network built this way may not be sufficiently expressive. Two common, and mathematically related, ideas
 are to use equivariant
 linear maps on tensors $(\R^{n})^{\otimes k} \to
 (\R^{m})^{\otimes \ell}$, or invariant 
 polynomials of degree $\ell$ from $\R^n \to \R^m$ \cite{kondor2018clebschgordan, dym2020universality,blumsmith2023machine}.
 
In both MRA and machine learning the use of tensors is theoretically
 desirable, but the cost of computing tensors grows
 exponentially in the degree. In addition, in MRA
 the sample complexity (the minimum number of observations required for accurate approximation) grows
 as $\sigma^{2d}$ where $d$ is the number of moments used and $\sigma^2$
 is the variance of the noise. For this reason, we wish
 to identify representations for which moments of low degree separate
 generic orbits. Previous work of Bendory and the first author demonstrated that
 neither the first or second moment carries enough information to
 separate orbits in all but the simplest representations
 \cite{bendory2022sample}. Thus an important problem is to understand
 and classify representations of compact groups for which the third
 moment can separate orbits. Previous work
 \cite{bandeira2017estimation} showed that for any finite group the
 generic orbit in the regular representation can be recovered from the
 third moment using Jennrich's algorithm. More generally it is proved
 in \cite{smach2008generalized} that if $G$ is a positive dimensional compact group then 
 the $G$-orbit of a function with non-singular Fourier coefficients in the infinite dimensional representation
 $L^2(G)$ can recovered from its third moment.

 In this paper we show that if $G$ is one of the classical groups
 $\SU(n), \SO(n), \Sp(n)$ then it is possible to use representation theory
 to define the notion
 of {\em band-limited function} which generalizes the notion
 of band-limited function in Fourier theory as well as
 previous defintions for $G=\SO(3)$. 
 Our main result can be stated as follows.
 \begin{theorem}[Informal] \label{thm.informal}
   (i) If $G$ is one of the classical groups then the generic band-limited
   function can be recovered up to translation by a single unitary matrix from its third moment.
   
   (ii) The orbit of any generic
   band-limited function in $L^2(\SU(n))$ or real-valued band-limited
   function in $L^2(\SO(2n+1))$ can be recovered from its third moment.
 \end{theorem}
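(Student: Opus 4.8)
The plan is to translate everything into noncommutative Fourier analysis via Peter--Weyl and then run a Tannaka--Krein--type reconstruction ``truncated to the band,'' driven by the bispectrum.

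First I would record a function $f\in L^2(G)$ by its matrix Fourier coefficients $\hat f(\rho)\in\End(V_\rho)$, $\rho$ ranging over the irreducibles of $G$, normalised so that a $G$-translate of $f$ corresponds to $\hat f(\rho)\mapsto\hat f(\rho)\rho(g)$, and call $f$ band-limited when $\hat f(\rho)=0$ unless $\rho$ lies in the finite set $S$ of irreducibles appearing in $V^{\otimes a}\otimes(V^{*})^{\otimes b}$ for $a,b$ below the band limit, $V$ the defining representation of the classical group $G$ (for $G=\SO(3)$ and $V=V_1$ this recovers the usual $\{V_0,\dots,V_L\}$). The structural property I will use throughout is that $S$ is \emph{chain-connected to $V$}: every $\rho\in S$ is reached from $V$ by finitely many steps ``tensor with an element of $S$, pass to a constituent still in $S$'' (for $\SO(3)$, via $V_\ell\subset V_1\otimes V_{\ell-1}$; the other classical bands behave the same way). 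I then expand the second and third moments of the uniform orbit measure in these coordinates: the second moment is the translation-invariant power spectrum $P_\rho:=\hat f(\rho)\hat f(\rho)^{*}$, and the third moment is the \emph{bispectrum}, namely for every triple $(\rho_1,\rho_2,\rho_3)$ with $\rho_3$ a constituent of $\rho_1\otimes\rho_2$ and a fixed isometric Clebsch--Gordan embedding $C_{\rho_1\rho_2}^{\rho_3}\colon V_{\rho_3}\hookrightarrow V_{\rho_1}\otimes V_{\rho_2}$, the matrix $B(\rho_1,\rho_2,\rho_3)=(\hat f(\rho_1)\otimes\hat f(\rho_2))\,C_{\rho_1\rho_2}^{\rho_3}\,\hat f(\rho_3)^{*}$, which is translation-invariant because $C_{\rho_1\rho_2}^{\rho_3}$ is an intertwiner. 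When all $\hat f(\rho)$ are invertible this is Kakarala's complete orbit invariant; the new ingredient is that for band-limited $f$ the coefficients outside $S$ vanish, so one cannot invoke that theorem directly.

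Next I would reduce to a problem about unitary matrices. For $f$ off a proper closed subvariety of the band-limited functions, every $\hat f(\rho)$ with $\rho\in S$ is an invertible square matrix, so $P_\rho$ is invertible and $\hat f(\rho)=P_\rho^{1/2}U_\rho$ with $U_\rho$ unitary; since $P_\rho$ is known from the second moment, recovering $f$ from its moments is equivalent to solving for the family $(U_\rho)_{\rho\in S}$, and each bispectrum equation becomes $(U_{\rho_1}\otimes U_{\rho_2})\,C_{\rho_1\rho_2}^{\rho_3}\,U_{\rho_3}^{*}=T(\rho_1,\rho_2,\rho_3)$ for an explicitly known $T$. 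The $G$-translates correspond to $U_\rho\mapsto U_\rho\rho(g)$ and solve every equation, since the $\rho(g)$ are intertwiners; the goal is to show that, generically, these are the only unitary solutions. The mechanism: when $\rho_3$ occurs with multiplicity one in $\rho_1\otimes\rho_2$, the triple equation determines $U_{\rho_3}$ from $U_{\rho_1},U_{\rho_2}$, and a family $(U_\rho)$ satisfies all triple equations exactly when, for each occurring triple, $U_{\rho_1}\otimes U_{\rho_2}$ sends the Clebsch--Gordan copy $C_{\rho_1\rho_2}^{\rho_3}(V_{\rho_3})$ to the subspace singled out by a fixed reference solution. Running this along chains out of $V$ (by chain-connectedness of $S$) shows the whole solution family is governed by $U_V$ alone, and that the admissible $U_V$ form the group $K$ of unitaries on $V$ whose induced action on the tensor powers occurring in $S$ is compatible with every $G$-Clebsch--Gordan copy indexed by $S$ --- a closed subgroup of $\mathrm U(V)$ containing $G$. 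Because $S$ contains both $V$ and $V^{*}$ the reconstruction is complete (no part of $G$ is ``invisible''), so $f$ is recovered up to translation by a single unitary matrix $U\in K$, which is part~(i).

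Finally, for $\SU(n)$ and $\SO(2n+1)$ I would show $K=G$, which is where the choice of group matters and which I expect to be the crux. For $G=\SU(n)$ with $V=\C^n$, the band contains $V^{*}\cong\wedge^{n-1}V$ and hence the trivial representation $\wedge^nV$; writing the residual ambiguity as $U_V\mapsto U_V W$, the triple equation for $(V,\wedge^{n-1}V,\mathrm{triv})$ --- together with the translation-invariance of $\hat f(\mathrm{triv})$, which fixes $U_{\mathrm{triv}}$ --- forces $W\otimes\wedge^{n-1}W$ to act trivially on the $\wedge^nV$-line, i.e.\ $\det W=1$, so $K=\SU(n)$ and part~(i) already gives orbit recovery. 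For $G=\SO(2n+1)$ with $V=\C^{2n+1}$ (and $f$ real-valued), the invariant bilinear form is a copy of the trivial representation inside $V\otimes V\subseteq S$, which forces $K\subseteq\mathrm O(2n+1)=\SO(2n+1)\times\{\pm I\}$; the $-I$-factor is not cut out by any further Clebsch--Gordan relation, but the corresponding reconstruction has Fourier data $\bigl((-1)^{|\lambda_\rho|}\hat f(\rho)\bigr)_\rho$, which represents $x\mapsto f(-x)$ and is therefore not an element of $L^2(\SO(2n+1))$; hence $K=\SO(2n+1)$. The delicate, group-dependent point here --- and, together with the genericity bookkeeping of the previous step (controlling ranks of the Clebsch--Gordan--twisted contractions and checking chain-connectedness of the classical bands), the bulk of the work --- is precisely the identification of $K$ with $G$: for the remaining classical groups $K$ is genuinely larger (this is why part~(ii) is stated only for $\SU(n)$ and $\SO(2n+1)$), the obstruction for $\SO(2n)$ being tied to the splitting of $\wedge^nV$ and the type-$D_n$ outer automorphism. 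Combining the identification $K=G$ with the propagation argument yields \Cref{thm.informal}.
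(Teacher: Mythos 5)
Your overall architecture is the paper's: band-limit via the defining representation, observe that every irreducible in the band is reached from $V$ by ``tensor and pass to a constituent'' (the paper's Proposition \ref{prop.bandone} and Theorem \ref{thm.classical}, proved by additivity of highest weights and the surjection $V\otimes\wedge^kV\to\wedge^{k+1}V$), use the second moment to pin each Fourier coefficient down to a unitary, and propagate along the chain so that everything is governed by the single unitary on $V$ --- that is part (i). Your $\SU(n)$ argument is also essentially the paper's: the trivial representation $\wedge^nV\subset V\otimes\wedge^{n-1}V$ has a bispectrum-determined coefficient, and the induced action of the ambiguity $W$ on that line is $\det W$, forcing $\det W=1$. (Two smaller caveats: your banding by constituents of $V^{\otimes a}\otimes(V^*)^{\otimes b}$ is not the paper's banding by $\sum a_i$ in $\lambda=\sum a_i\omega_i$, e.g.\ it places $\wedge^2V$ and $\Sym^2V$ at the same level for $\SU(n)$; and your propagation step is stated only for multiplicity-one constituents, whereas the paper's Lemma \ref{lemma.key} inverts the full isotypic decomposition $F(f)(V\otimes W)=C_{V,W}[\oplus_iF(f)(V_i)]C_{V,W}^*$ and so needs no multiplicity-one hypothesis.)

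The genuine gap is your treatment of the reflection for $\SO(2n+1)$. You assert that the $-I$ ambiguity ``is not cut out by any further Clebsch--Gordan relation'' and instead argue that the twisted Fourier data $\bigl((-1)^{k}\hat f(V_k)\bigr)$ does not represent an element of $L^2(\SO(2n+1))$. That escape route is not available: by Peter--Weyl, \emph{any} finitely supported assignment of matrices to the irreducibles of $G$ is the Fourier transform of an honest function on $G$, so if the sign twist were compatible with every Clebsch--Gordan embedding it would produce a genuine band-limited function with the same bispectrum and the theorem would be false. The correct, and group-dependent, point --- which is exactly where the paper's proof of Theorem \ref{thm.suso} does its work --- is that the twist \emph{is} excluded by a Clebsch--Gordan relation: since $\dim V=2n+1$, the representation $\wedge^{n+1}V$ is dual to $\wedge^nV$, and all $\wedge^kV$ are self-dual for $\SO(2n+1)$, so $V_n=\wedge^nV$ occurs as a constituent of $V\otimes V_n$ via $\wedge^{n+1}V$. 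Propagating the sign along this copy forces $(-1)^{n+1}F(f)(V_n)=(-1)^nF(f)(V_n)$, a contradiction once $F(f)(V_n)\neq 0$. Without this (or an equivalent) relation your argument only yields recovery up to $O(2n+1)$, i.e.\ Kakarala's reflection ambiguity, which is precisely what part (ii) is supposed to remove.
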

\begin{remark}
As is the case in Fourier theory, for
any band level, the vector space of band limited functions is
finite dimensional. When we say that the orbits of generic band-limited functions
can recovered from their third moments, we mean that the set of orbits which
cannot be recovered is contained in a proper real algebraic
subset of this finite-dimensional vector space.
A precise sufficient condition in terms of the non-singularity of
Fourier coefficients is given in Theorem~\ref{thm.suso}.

We note that our result gives a large class of finite-dimensional representations for which the third moment separates generic orbits.
 \end{remark}

The proof of Theorem~\ref{thm.informal} uses representation theory to
generalize the well-known frequency marching result that states that a
band-limited function on $S^1$ can be recovered from its bispectrum.
A short discussion of potential algorithms using generalized frequency
marching is given in Section~\ref{sec.algorithms}.
When $G= \SO(3)$ Kakarala \cite{kakarala2012bispectrum} showed that
the $\SO(3)$-orbit of a band-limited real valued function can be
recovered up to reflection from the bispectrum, and our result
eliminates the reflection ambiguity. Indeed the idea of using
techniques of representation theory to study the orbit recovery
problem was motivated by Kakarala's earlier work for $\SO(3)$.

In Section~\ref{sec.so3} we
  focus on the group $\SO(3)$, since results for this group 
  have the most potential
  applications. In particular, we 
  compare our work with recent work of Liu and Moitra~\cite{liu2021algorithms} for recovering band-limited functions in $L^2(S^2)$.
We also prove (Corollary \ref{cor.cryo})
that if we consider the finite dimensional approximation
of $L^2(\R^3)$ as band-limited
functions on $R$ spherical shells, where the number of shells exceeds
the band-limit then the $\SO(3)$ orbit of a generic function can be recovered
from the third moment. This is a standard assumption in the cryo-EM literature
\cite{bandeira2020non} and our result affirms many cases of a conjecture
made in \cite{bandeira2017estimation}.

 \section{Moments of representations of compact groups}
Let $V$ be a unitary representation of a compact group $G$.
The $d$-th moment of $V$ is the function
$m_d \colon V \to \underbrace{V \otimes \ldots \otimes V}_{\text{ $d -1$ times}} \otimes V^*$
defined by the formula
\begin{equation} \label{eq.moment}
  f \mapsto m_d(f) \colon = \int_G \overbrace{g\cdot f \otimes \ldots \otimes
  g\cdot f}^{\text{$d -1$ times}} \otimes \overline{g\cdot f}\;dg
\end{equation}
Note that the formula of \eqref{eq.moment} is invariant under translation
by $G$ so for any $f \in V$, $m_d(f)$ a $G$-invariant element
of the tensor $V^{d-1} \otimes V^*$ or equivalently a $G$-invariant element of
$\Hom(V, V^{d})$.

If $V$ is a real representation then
the $d$-th moment is an element of $\Sym^d V$ and the collection
of moments form a set of generators for the invariant ring $\R[V]^G$.

When $V$ is identified with a space of functions $D \to \C$ where
$D$ is a domain on which $G$ acts (for example we can take $D=G$ and
consider $V$ as a subspace of $L^2(G)$)
then $m_d(f)$ is viewed as a function $D^d \to \C$ defined by
the formula
\begin{equation} \label{eq.momentfunction}
  m_d(f)(x_1, \ldots , x_d) = \int_G (g\cdot f)(x_1) \ldots (g\cdot f)(x_{d-1})
  \overline{(g\cdot f)(x_d)}\;dg
\end{equation}
where $g\cdot f \colon D \to \C$ is the function 
$(g\cdot f)(x) = f(g^{-1}x)$.

\subsection{Moments and the decomposition into irreducibles}
A general finite dimensional representation of a compact group
can be decomposed as $V = \oplus_{\ell =1}^L V_\ell^{R_\ell}$ where
the $V_\ell$ are distinct non-isomorphic irreducible representations
of $G$ of dimension $N_\ell$. An element $f \in V$ has a unique
$G$-invariant decomposition as a sum
\begin{equation}
  f = \sum_{\ell=1}^L \sum_{i=1}^{R_\ell} f_\ell[i]
\end{equation}
where $f_\ell[i]$ is in the $i$-th copy of the irreducible
representation $V_\ell$.
For fixed $\ell$ the vectors $f_\ell[1], \ldots , f_\ell[R_\ell]$
determine a $N_\ell \times R_\ell$ matrix $A_\ell(f)$ which we sometimes refer
to as the {\em coefficient matrix of $f$ in $V_\ell$}.

The $d$-th moment is a map $m_d \colon V \to
(V^{\otimes (d-1)} \otimes V^*)^G$, which takes a vector $f \in V$ to
invariant part of the tensor $f \otimes f \otimes \ldots \otimes
\overline{f}$. Identifying $V^{\otimes (d-1)} \otimes V^* =
\Hom(V,V^{\otimes d-1})$, the $d$-th moment decomposes according to the
  decomposition of $V$ into irreducibles as a functions
  $$m_d[V_{i_1}, \ldots , V_{i_d}] \colon V \to \oplus_{i_1,\ldots , i_d} \Hom(W_{i_d}, W_
  {i_1} \otimes \ldots \otimes  W_{i_{d-1}}))^G$$
  where $W_\ell = V_\ell^{R_\ell}$.
By Schur's lemma $m_d[V_{i_1}, \ldots , V_{i_d}]$ is non-zero if and only if the irreducible representation $V_{i_d}$ is 
a summand in the tensor product $W_{i_1} \otimes \ldots \otimes W_{i_{d-1}}$.

  The map $m_d[V_{i_1}, \ldots , V_{i_d}]$ can be described explicitly
  as follows: If $f =\sum_{\ell =1}^L A_\ell(f)$ where $A_\ell(f) =
  (f_\ell[1], \ldots , f_\ell[R_L]) \in W_\ell$, let $B_{i_d}(f)$ be
  the projection of \[A_{i_1}(f) \otimes \ldots \otimes A_{i_{d-1}}(f)
  \in W_{i_1} \otimes \ldots W_{i_{d-1}}\] to the isotypic component,
  $T_{i_d}$ of $W_{i_1} \otimes \ldots \otimes W_{i_{d-1}}$
  corresponding to the irreducible representation $V_{i_d}$.  Then
  $m_d[V_{i_1}, \ldots , V_{i_d}](f)$ is the element of $\Hom(W_{i_d},
  T_{i_d})$ represented by the matrix $A_{i_d}(f) B_{i_d}(f)^*$.
  
In the case when $d=2$ there is a simple description of the information determined by the second moment.
\begin{proposition} \label{prop.secondmoment} \cite[Theorem 2.3]{bendory2022sample}
  Let $V$ be a finite dimensional representation of $G$.
The second moment $m_2(f)$ determines, for each irreducible
$V_\ell$ appearing in $V$, the product $A_\ell(f) A_\ell(f)^*$. In particular, %if $A_\ell(f)$ is full rank,
the second moment determines
  $A_\ell(f)$ up to translation by an element of $U(V_\ell)$, or if $A_\ell(f)$
  is real, an element of $O(V_\ell)$.
\end{proposition}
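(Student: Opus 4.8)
The plan is to derive everything from the explicit description of $m_d$ recorded above, specialized to $d=2$, together with one elementary fact about Hermitian squares of matrices. First I would observe that $m_2(f)$ is nothing but the collection of its blocks $m_2[V_i,V_j](f)$, and that by the description above the block $m_2[V_i,V_j](f)$ is assembled from $B_j(f)$, the projection of $A_i(f)\in W_i=V_i^{R_i}$ onto the $V_j$-isotypic component of $W_i$. Since $W_i$ is entirely $V_i$-isotypic, this projection vanishes unless $j=i$, in which case it is all of $A_i(f)$; hence $m_2(f)$ carries exactly the same information as the tuple of diagonal blocks $m_2[V_\ell,V_\ell](f)$, $1\le\ell\le L$, and each such block is the operator represented by $A_\ell(f)A_\ell(f)^*$. (If one prefers not to quote the general formula: in the $\ell$-isotypic part write $g\cdot f\otimes\overline{g\cdot f}$ as $(\rho_\ell(g)\otimes\mathrm{Id})A_\ell(f)\otimes\overline{(\rho_\ell(g)\otimes\mathrm{Id})A_\ell(f)}$, integrate, and apply Schur orthogonality $\int_G\rho_\ell(g)\,T\,\rho_\ell(g)^{-1}\,dg=\tfrac{\trace T}{N_\ell}\,\mathrm{Id}_{V_\ell}$ to the two $V_\ell$-factors; what survives is $\tfrac{1}{N_\ell}\,\mathrm{Id}_{V_\ell}$ tensored with the Gram matrix of the columns of $A_\ell(f)$, which is the same data as $A_\ell(f)A_\ell(f)^*$.) This is the first assertion.

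For the second assertion I would invoke the standard fact that, for matrices of a fixed shape, $XX^*=YY^*$ if and only if $Y$ and $X$ differ by translation by a unitary matrix (over $\R$: an orthogonal matrix). The nontrivial direction follows by comparing singular value decompositions of $X$ and $Y$ --- their singular values agree since $XX^*$ and $YY^*$ have the same spectrum --- the remaining freedom being precisely a unitary change of coordinates on the multiplicity space. When $A_\ell(f)$ has full rank this takes the clean polar form $A_\ell(f)=J\,(A_\ell(f)A_\ell(f)^*)^{1/2}$ (or the analogous factorization with $(A_\ell(f)^*A_\ell(f))^{1/2}$, depending on the aspect ratio) with $J$ a partial isometry, so once $A_\ell(f)A_\ell(f)^*$ is known, $A_\ell(f)$ is pinned down up to replacing $J$ by $UJ$. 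Unwinding the identification $W_\ell\cong V_\ell\otimes\C^{R_\ell}$, this freedom is exactly translation of $A_\ell(f)$ by an element of $U(V_\ell)$, and by an element of $O(V_\ell)$ when $A_\ell(f)$ is real. Combined with the first assertion, this is the claim.

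I do not expect a serious obstacle: the statement is of the "assemble two known facts" type, and it is after all quoted from \cite{bendory2022sample}. The only genuinely substantive ingredient is the last linear-algebra lemma, whose proof is no more than the polar/SVD decomposition; the representation-theoretic input --- the vanishing of the off-diagonal blocks of $m_2$ and the identification of each diagonal block with a Hermitian square --- is immediate from Schur's lemma and the formula for $m_d$ above. The main thing I would be careful about is the bookkeeping of the identifications $\overline{V_\ell}\cong V_\ell^*$ and $W_\ell\cong V_\ell\otimes\C^{R_\ell}$, and in particular whether the invariant that $m_2$ produces is presented as $A_\ell(f)A_\ell(f)^*$ or as the Gram matrix $A_\ell(f)^*A_\ell(f)$; since the two determine $A_\ell(f)$ up to the same group $U(V_\ell)$, the stated conclusion is unaffected.
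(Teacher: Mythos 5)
The paper offers no proof of this proposition --- it is quoted directly from \cite[Theorem 2.3]{bendory2022sample} --- so there is no internal argument to compare yours against; your proof is correct and is the standard one (Schur's lemma/orthogonality to kill the off-diagonal blocks and identify the diagonal block with a Hermitian square, then polar/SVD for the uniqueness up to a unitary). The only imprecision is the closing parenthetical: with the paper's $N_\ell\times R_\ell$ convention for $A_\ell(f)$, knowing the Gram matrix $A_\ell(f)^*A_\ell(f)$ determines $A_\ell(f)$ up to \emph{left} multiplication by $U(N_\ell)=U(V_\ell)$, whereas knowing $A_\ell(f)A_\ell(f)^*$ determines it up to \emph{right} multiplication by $U(R_\ell)$, so the two are not literally ``the same group''; fortunately the quantity your Schur-orthogonality computation actually produces is the Gram matrix, which gives exactly the $U(V_\ell)$ (resp. $O(V_\ell)$) ambiguity asserted in the statement, and for that direction the full-rank hypothesis is not even needed.
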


The next result proves that if $V$ contains a copy of the trivial representation then the third moment determines the first and second moments.  
\begin{proposition} \label{prop.3to2}
If the coefficient matrix, $A_0(f)$, of the trivial representation
is non-zero then 
$m_{1}(f)$ and $m_2(f)$ are determined from $m_3(f)$.
\end{proposition}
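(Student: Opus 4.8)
The plan is to exploit the fact that the trivial representation $V_0$ (on which $G$ acts trivially) sits inside $V$, so that the constant $a_0 := A_0(f)$, a nonzero scalar (or $1\times R_0$ vector, but morally a scalar), appears as a ``tensor factor'' inside every slot of the third moment. First I would write $f = f_0 + f'$ where $f_0$ is the projection of $f$ onto the trivial isotypic component $W_0$ and $f'$ is the projection onto the sum of the nontrivial isotypic components. Since $g\cdot f_0 = f_0$ for all $g$, expanding $m_3(f) = \int_G (g\cdot f)\otimes(g\cdot f)\otimes \overline{(g\cdot f)}\,dg$ multilinearly produces eight terms, and the key observation is that $f_0$ is $G$-invariant so it can be pulled out of the integral in whichever slots it occupies.

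The key steps, in order: (1) Identify which of the eight terms in the expansion of $m_3(f)$ can be recognized \emph{individually} from $m_3(f)$. The point is a grading/weight argument: the trivial isotypic piece of $V$ is canonically split off, so the decomposition $V = W_0 \oplus (W_0)^\perp$ is $G$-invariant and hence $m_3(f)$ respects it; concretely, projecting $m_3(f)$ onto $W_0 \otimes W_0 \otimes W_0^*$, onto $W_0\otimes (W_0^\perp)\otimes (W_0^\perp)^*$ (and its permutations), etc., picks out each term separately. (2) The term with $f_0$ in all three slots is $\int_G f_0\otimes f_0 \otimes \overline{f_0}\,dg = f_0\otimes f_0\otimes\overline{f_0}$; from this one reads off $f_0$ up to the phase ambiguity inherent in $m_3$, but in fact $f_0\otimes\overline{f_0}$ already gives $\|f_0\|^2$ and more — crucially $A_0(f)A_0(f)^*$. (3) The ``mixed'' terms, e.g. $\int_G f_0 \otimes (g\cdot f')\otimes \overline{(g\cdot f')}\,dg = f_0 \otimes m_2(f')$ (after the obvious reindexing), exhibit $m_2(f')$ multiplied by the \emph{known} nonzero vector $f_0$; since $f_0\ne 0$ we can divide it out and recover $m_2(f')$. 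Combined with $A_0(f)A_0(f)^*$ from step (2), and noting $m_2(f) = m_2(f_0) \oplus m_2(f')$ block-diagonally (the trivial isotypic block of $m_2(f)$ is exactly $A_0(f)A_0(f)^*$ by Proposition~\ref{prop.secondmoment}), we recover all of $m_2(f)$. (4) Similarly the term $\int_G f_0\otimes f_0\otimes\overline{g\cdot f'}\,dg$ vanishes (the integral of $g\cdot f'$ over $G$ is its projection to $W_0$, which is $0$), but the term $\int_G (g\cdot f')\otimes f_0\otimes \overline{f_0}$-type contributions paired correctly, or more simply the term linear in $g\cdot f$ in one covariant slot against two copies of $f_0$, will after dividing by $f_0\otimes\overline{f_0}$ hand us $m_1(f) = \int_G g\cdot f\,dg = f_0$ — which we already know, but this confirms $m_1(f)$ is determined.

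The main obstacle I anticipate is bookkeeping the eight-term expansion cleanly and arguing rigorously that each individual term is recoverable from $m_3(f)$, rather than just their sum. The clean way to handle this is the $G$-invariant decomposition $V = \bigoplus_\ell W_\ell$ with $W_0$ the trivial part: since projection onto any sub-sum of isotypic components is $G$-equivariant, $m_3(f)$ lives in $\bigoplus (W_{i_1}\otimes W_{i_2}\otimes W_{i_3}^*)^G$ and each multi-index component is separately visible; the eight terms above are precisely grouped by how many of $i_1,i_2,i_3$ equal $0$, and within the integral each $W_0$-slot forces the corresponding tensor factor to be $G$-invariant, i.e. equal to $f_0$ in that slot, pulling it out. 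One subtlety to be careful about: in the ``one trivial slot'' terms one must check that the relevant term is $f_0 \otimes m_2(f')$ and not contaminated by cross terms like $\int_G f_0 \otimes (g\cdot f_0)\otimes\overline{(g\cdot f')}$ — but $g\cdot f_0 = f_0$ lands in $W_0$, so by the isotypic grading such a term lies in $W_0\otimes W_0\otimes (W_0^\perp)^*$, a different component, and moreover $\int_G \overline{g\cdot f'}\,dg = 0$ kills it anyway. Once the grading argument is in place, each recovery step is a one-line division by the nonzero quantity $f_0$ (or $f_0 \otimes \overline{f_0}$), so the proof is short.
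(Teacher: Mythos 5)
Your proposal follows essentially the same route as the paper: project $m_3(f)$ onto the isotypic blocks of $V\otimes V\otimes V^*$, recover the trivial component $f_0=A_0(f)$ from the all-trivial block, and then read off $m_2$ from the blocks with a single trivial slot by pulling the $G$-invariant $f_0$ out of the integral and dividing by it. The one place you need to tighten is step (2), and it is not cosmetic: you claim $f_0$ is read off only ``up to the phase ambiguity inherent in $m_3$'' and then retreat to $A_0(f)A_0(f)^*$, but the proposition asserts that $m_1(f)=A_0(f)$ itself is determined, and your step (3) division by $f_0$ (and step (4) division by $f_0\otimes\overline{f_0}$) would inherit any unresolved phase. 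In fact there is no ambiguity, and this is the crux of the paper's proof: the diagonal entries of the $[V_0,V_0,V_0]$ block are $f_0[j]\,\overline{f_0[j]}\,f_0[j]$-type products, i.e.\ of the form $z|z|^2$ (or $\bar z |z|^2$) with $z=f_0[j]$, and $z\mapsto z|z|^2$ is injective on $\C$ (write $z=re^{i\theta}$, recover $\theta$ directly and $r$ as a real cube root). With that one line inserted, $f_0$ is known exactly, your divisions are legitimate since $A_0(f)\neq 0$ by hypothesis, and the rest of your argument (the vanishing of the mixed terms containing $\int_G g\cdot f'\,dg$, the block-diagonal assembly of $m_2(f)=m_2(f_0)\oplus m_2(f')$) goes through as written.
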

\begin{proof}
  Since the trivial representation $V_0$ is one-dimensional the coefficient matrix
  $A_0(f)=(f_0[1], \ldots , f_0[R_0]) $ is just a row vector of length $R_0$ where $R_0$ is the multiplicity. The first moment is simply the projection
  of $V \to V^G = V_0^{R_0}$ so $m_1(f) = A_0(f)$.
  On the other hand, consider
  the component $m_3[V_0,V_0,V_0]$ of the third moment which is a $G$-invariant map
  $V \to W_0 \otimes W_0 \otimes W_0^*$. Since $G$ acts trivially
  on $W_0$, the entire tensor product $W_0 \otimes W_0$ is the $V_0$ isotypic component. Thus
  $m_3(f)[V_0,V_0,V_0] = A_0(f) (A_0(f) \otimes A_0(f))^*$ where we view
  $A_0(f) \otimes A_0(f)$ as a matrix of size $1 \times R_0^2$.
  Among the entries of this matrix are the products $f_0[j]
  \overline{f_0[j]f_0[j]} = \overline{f_0[j]}|f_0[j]|^2$.
    Writing $f_0[j] = r_je^{i\theta}$ for $r_j>0$ we see that $f_0[j]$ is determined
    by $\overline{f_0[j]}|f_0[j]|^2$.

    In particular if $f_0[i] \neq 0$ is known, $m_2(f)$ is determined
    by ${1\over{|A_0(f)|}}\int_G A_0[f] \otimes g\cdot f \otimes \overline
    {g\cdot f}$ which
    is a sum of components of $m_3(f)$. 
\end{proof}

The following bootstrap result is a generalization of
\cite[Proposition 4.15]{bandeira2017estimation}.
  
\begin{proposition} \label{prop.bootstrap}
  Let $V$ be a representation of $G$ with
$V=V_1^{R_1} \oplus \ldots
  \oplus V_L^{R_L}$ 
where $V_1, \ldots , V_L$ distinct irreducibles.
Assume that for every $1 \leq k \leq L$, the generic vector $f$ in
the representation 
$V_1^{R_1}  \oplus \widehat{V_k^{R_k}} \ldots \oplus V_L^{R_L}$
has trivial stabilizer. (Here the notation $\widehat{V_k^{R_k}}$ means
that the summand is omitted.)

If 
the orbit of a generic vector $f \in V$ is determined from
  the $d$-th moment $m_d(f)$ and $W \supset V$ is another representation
  with the same irreducible components then the orbit of a generic vector
  $h \in W$ is determined by its the $d$-th moment $m_d(h)$.
\end{proposition}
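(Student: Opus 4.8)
The plan is to argue by induction on $\sum_\ell (S_\ell - R_\ell)$, where $W=\bigoplus_\ell V_\ell^{S_\ell}$ and $V=\bigoplus_\ell V_\ell^{R_\ell}$ with $S_\ell\ge R_\ell$, so that it suffices to treat the single case $W=V\oplus V_j$ in which $W$ is obtained from $V$ by adjoining one further copy of a single irreducible $V_j$. For this reduction we first note that the trivial-stabilizer hypothesis is inherited by any representation $V'$ whose multiplicities dominate those of $V$: the $G$-equivariant projection $\pi$ of $V'$ onto $V_1^{R_1}\oplus\cdots\oplus\widehat{V_k^{R_k}}\oplus\cdots\oplus V_L^{R_L}$ is surjective, so it carries a generic vector to a generic one, $\mathrm{Stab}(v)\subseteq\mathrm{Stab}(\pi v)=\{e\}$, and the bad locus pulls back under $\pi$ to a proper closed subvariety of $V'$. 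Granting this, in the inductive step we pick $j$ with $S_j>R_j$, observe that $V':=V_1^{R_1}\oplus\cdots\oplus V_j^{R_j+1}\oplus\cdots\oplus V_L^{R_L}$ again satisfies the hypotheses of the proposition (with the same $W$), and are reduced to the extension $V\subset V\oplus V_j$.

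For that case the first key point is that the $d$-th moment is natural under $G$-equivariant linear maps: if $\pi\colon W\to V$ is $G$-equivariant and linear then, since $\pi$ commutes with the translations $g\cdot(-)$ inside the defining integral \eqref{eq.moment}, $m_d(\pi h)$ is obtained from $m_d(h)$ by applying the fixed linear map that $\pi$ induces on $W^{\otimes(d-1)}\otimes W^*$; in particular $m_d(\pi h)$ is determined by $m_d(h)$. The second point is that recovering the orbit of $h\in W$ is the same as recovering the tuple of coefficient matrices $\big(A_\ell(h)\big)_\ell$ up to simultaneous left multiplication by $\big(\rho_\ell(g)\big)_\ell$, $g\in G$, where now $A_j(h)$ has size $N_j\times(R_j+1)$ and the other blocks are exactly as in $V$.

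The construction then proceeds as follows. Fix a generic $h\in W$ and choose two full-rank matrices $Q_0,Q_1\in\C^{(R_j+1)\times R_j}$ such that the columns of $A_j(h)Q_0$ together with those of $A_j(h)Q_1$ exhaust all $R_j+1$ columns of $A_j(h)$ (e.g. $Q_0$ picks out columns $1,\dots,R_j$ and $Q_1$ picks out columns $1,\dots,R_j-1,R_j+1$). For $t=0,1$ let $\pi_t\colon W\to V$ be the $G$-equivariant surjection acting on coefficient matrices by $A_j\mapsto A_jQ_t$ and $A_\ell\mapsto A_\ell$ for $\ell\ne j$. Since $\pi_t$ is a surjective linear map, $\pi_t(h)$ is a generic vector of $V$, so by hypothesis its orbit is determined by $m_d(\pi_t h)$ and hence by $m_d(h)$; picking a representative yields matrices $B^{(t)}_\ell$ ($\ell\ne j$) and $C^{(t)}$ with $B^{(t)}_\ell=\rho_\ell(g_t)A_\ell(h)$ and $C^{(t)}=\rho_j(g_t)A_j(h)Q_t$ for some unknown $g_t\in G$. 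To finish we align the two frames: $(B^{(0)}_\ell)_{\ell\ne j}$ and $(B^{(1)}_\ell)_{\ell\ne j}$ are both $G$-translates of the projection of $h$ to $V_1^{R_1}\oplus\cdots\oplus\widehat{V_j^{R_j}}\oplus\cdots\oplus V_L^{R_L}$, hence generic vectors there and, by the hypothesis applied with $k=j$, of trivial stabilizer, so there is a unique $\gamma\in G$ — necessarily $g_1g_0^{-1}$ — with $B^{(1)}_\ell=\rho_\ell(\gamma)B^{(0)}_\ell$ for all $\ell\ne j$, and $\gamma$ is determined by the known matrices. Then $C^{(0)}=\rho_j(g_0)A_j(h)Q_0$ while $\rho_j(\gamma)^{-1}C^{(1)}=\rho_j(g_0)A_j(h)Q_1$, so by the choice of $Q_0,Q_1$ every column of $D:=\rho_j(g_0)A_j(h)$ is recovered, and $\big((B^{(0)}_\ell)_{\ell\ne j},D\big)$ is the full coefficient-matrix tuple of $g_0\cdot h$, which gives the $G$-orbit of $h$.

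We expect the main obstacle to be precisely this alignment step, since it is the only place where the trivial-stabilizer hypothesis is used and it requires that each ``shadow'' $\pi_t(h)$ genuinely be generic in $V$ (so that the recovery assumption applies) while simultaneously each recovered sub-tuple $(B^{(t)}_\ell)_{\ell\ne j}$ lands in the trivial-stabilizer locus. The accompanying genericity bookkeeping — checking that the finitely many loci to be avoided pull back to proper closed subvarieties of $W$, so that their union still has dense complement and ``generic $h$'' is meaningful throughout the argument — is routine but should be carried out carefully. Overall this is a representation-theoretic version of the bootstrap in \cite[Proposition 4.15]{bandeira2017estimation}.
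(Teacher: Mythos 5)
Your proposal is correct and follows essentially the same route as the paper: reduce by induction to adjoining a single copy of one irreducible, project away columns of the enlarged coefficient matrix to land in $V$ where the hypothesis applies, and use the trivial-stabilizer assumption on the representation with the $j$-th isotypic summand deleted to align the resulting group elements. The only cosmetic differences are that you use two column-selecting projections where the paper uses all $R_j+1$ omit-one-column projections, and you phrase the conclusion as explicit reconstruction rather than as uniqueness of the orbit, while also spelling out the inheritance of the stabilizer hypothesis that the paper leaves implicit.
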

\begin{example} \label{ex.fourierband}
  The hypothesis that $W$ has the same irreducible components as $V$
  is necessary. Consider the case where $G=S^1$ and $V =V_0 \oplus V_1$,
  $W = V_0 \oplus V_1 \oplus V_3$
  where the $V_n$ is the one-dimensional representation of $S^1$ where
  $S^1$ acts with weight $n$; i.e. $e^{i \theta } \cdot v = e^{n i \theta}v$.
  If $v = (v_0,v_1) \in V$ then the third moment determines
  $v_0$ and $|v_1|^2$ which determines the vector $v$ up to multiplication
  by an element $S^1$. On the other hand, the third moment of
  $w= (w_0, w_1, w_3) \in  W$ determines, $w_0, |w_1|^2, |w_3|^2$ which is not sufficient to determine the vector $w$ up to multiplication by an element of
  $S^1$. 
\end{example}

\begin{proof}
  By induction on the multiplicities we may reduce to the case that
\[ W=V_1^{R_1} \oplus \ldots V_\ell^{R_{\ell}+1} \oplus \ldots \oplus V_L^{R_L}\;\] i.e.
    all multiplicities of irreducibles in $W$ are the same as in $V$ except
    for the multiplicity of $V_\ell$ which is $R_\ell$ in $V$ and $R_\ell +1$
    in $W$. By reordering the irreducibles we assume that $\ell =1$

    Suppose $h \in W$ has coefficient matrices, $B_1, B_2, \ldots , B_L$
    where $B_1 = (f_1[1], \ldots , f_1[R_1+1])$. For each $j =1, \ldots , R_1+1$
    consider the $G$-invariant projection $\pi_j \colon W \to V$ which sends
    $h$ to the vector $f_j$ with coefficient matrices
    $B_1^j, B_2, \ldots, B_L$ where
    $B_1^j = \left(f_1[1], \ldots, \widehat{f_1[j]}, \ldots , f_1[R_1+1]\right)$
By assumption, the $G$-orbit of $f_j$ is determined from $m_2(f_j) = m_2(\pi_j h))$. 
In particular, if $h'$ is another vector in $W$ with $m_d(h) = m_d(h')$
and coefficient matrices, $B_1', \ldots, B_L'$ 
then there exists $g_1, \ldots , g_{R_1+1} \in G$ such that
$g_j(B_1^j, \ldots, B_{L}) = (B'^{j}_1, \ldots , B'_L)$. To show
that $g_j$'s are all equal note that for any $j_1, j_2$
the vector in the representation
$V_2^{R_2} \oplus \ldots V_L^{R_L}$ with coefficient
matrices $B_2, \ldots , B_{L}$ is fixed by
$g_{j_1}g_{j_2}^{-1}$ so by assumption on the representation we can conclude that $g_{j_1} = g_{j_2}$. Thus, there exists $g \in G$ such that for every
$j$, $g(B_1^j, B_2, \ldots , B_L) = (B'^{j}_1, B'_2, \ldots , B_L)$.
Since every row vector in $B_1$ (resp $B_1'$) is in some matrix
$B_1^j$ (resp. $B_1^{'j}$) it follows
that $g(B_1, \ldots , B_L) = (B_1', \ldots , B'_L)$. In other words,
$h' = gh$ for some $g \in G$.
\end{proof}

\subsection{The Fourier transform on compact groups}
If $G$ is a compact group and $f \in L^2(G)$ then the {\em Fourier
transform} (see \cite{smach2008generalized} as well as \cite[Chapter 8]{hewitt1970abstract}) of $f$ is a matrix-valued function $F(f)$ defined on the
set representations of $G$ by the formula \begin{equation}
\label{eq.fourier} F(f)(V) = \int_G f(g)D_V(g)^*\;dg \in \End(V)
\end{equation} where $D_V(g)$ the unitary linear transformation $V
\to V$ defined by $v \mapsto gv$.

The matrix $F(f)(V)$ is called the {\em Fourier coefficient} of $V$.
Later, we will implicitly choose a basis for each irreducible representation
so we can view the Fourier coefficient as a matrix.  As is the case for the classical Fourier transform, a
function $f \in L^2(G)$ is uniquely determined by its
Fourier coefficients $F(f)(V)$ where $V$ runs through all irreducible
representations of $G$ \cite[Theorem 31.5]{hewitt1970abstract}.

Conversely, if $V$ is a representation of $G$ and $T \in \End(V)$
is an endomorphism, then the inverse Fourier transform of $T$ is
the function
\begin{equation} \label{eq.fourierinv} f_T(g) = {1\over{\dim V}} \Tr(T D_v(g)^*).
\end{equation}.

\subsection{The regular representation and higher-order spectra}
Here we take $V = L^2(G)$ to be the regular representation.
In this case
$$m_d(f)(g_1, \ldots , g_d) = \int_G f(g^{-1}g_1)f(g^{-1}g_2) \ldots
f(g^{-1}g_{d-1})
\overline{f(g^{-1}g_d)}\;dg$$
  Applying the change of variables $g = g^{-1}g_d$ 
  we can rewrite
 \[m_d(f)(g_1, \ldots , g_d) = m_d(f)(g_d^{-1}g_1, \ldots ,
  g_d^{-1}g_{d-1},1).\]
  Hence after replacing $g$ with $g^{-1}$ we may view the $d$-th moment of the regular
  representation as the function on $G^{d-1}$.
  \begin{equation} \label{eq.moment-regular}
   m_d(f)(g_1, \ldots , g_{d-1})  = \int_G f^*(g) f(gg_1)
   \ldots  f(gg_{d-1}) \;dg
  \end{equation}
where $f^*(g)=\overline{f(g)}$.

The {\em $(d-1)$-st higher-spectrum}
$a_d(f)$ is defined
as the Fourier transform of the function $m_d(f)
\in L^2(G^{d-1})$.
Since every irreducible representation of $G^{d-1}$ is of the form
$V_1 \otimes \ldots \otimes  V_{d-1}$ we have
that
\begin{multline} \label{eq.highspectrum}
 a_d(f)(V_1 \otimes \ldots \otimes  V_{d-1}) 
  =\int_{G^{d-1}} \left(\int_G f^*(g) f(gg_1) \ldots
f(gg_{d-1})\;dg\right) \times \\D_{V_1}(g_1)^* \otimes \ldots D_{V_{d-1}}(g_{d-1}^*)\;dg_1\ldots dg_{d-1}
\end{multline}
Using the change of coordinates where replace $g_i$ with $gg_i$
and reversing the order of integration the right-hand side of \eqref{eq.highspectrum} becomes
\begin{multline}
\int_G  \int_{G^{d-1}} f(g_1) \ldots f(g_{d-1})\times \\ [D_{V_1}(g_1)^* \otimes
\ldots \otimes
D_{V_{d-1}}(g_{d-1})^*)]\;dg_1 \ldots d_{g_{d-1}}
[D_{V_1}(g) \otimes \ldots \otimes D_{V_{d-1}}(g)] f^*(g) \;dg  \\
 =  [F(f)(V_1) \otimes \ldots \otimes F(f)(V_{d-1})]
\int_G f^*(g)[D_{V_1}(g) \otimes \ldots \otimes D_{V_{d-1}}(g)]\;dg\\
 = [F(f)(V_1) \otimes \ldots \otimes F(f)(V_{d-1})]
 [F(f)(V_1 \otimes \ldots V_{d-1})]^*
\end{multline}
where the product is taken in the ring $\End(V_1 \otimes \ldots \otimes V_{d-1})$.
Since the $(d-1)$-st higher spectrum is the Fourier transform of the $d$-th
moment and the Fourier transform is invertible the $(d-1)$-st
higher spectrum carries the same information as the $d$-th moment.
\subsection{The bispectrum}
When $d=3$ the third moment $m_d(f)$ carries the same information
as the bispectrum
$a_2(f)$ whose value on a tensor $V \otimes W$
is
\begin{equation} \label{eq.bispectrum}
  a_2(f)(V \otimes W)  =  [F(f)(V) \otimes F(f)(W)] [F(f)(V \otimes W)]^*
%  & = & [F(f)(V) \otimes F(f)(W)]C_{V,W} F(f)(V_1)^* \oplus \ldots \oplus F(f)(V_k)^* C_{V,W}^{-1}
\end{equation}
For every pair of irreducible representations $V$, $W$ 
we choose an isomorphism
\[V \otimes W \simeq V_1
\oplus \ldots \oplus V_r\] with the $V_i$ not necessarily distinct
irreducibles. It follows that there are unitary matrices $C_{V,W}$
such that for all $f \in L^2(G)$
\[F(f)(V \otimes W) =
C_{V,W}\left( F(f)(V_1) \oplus  \ldots \oplus F(f)(V_r)\right) C^*_{V,W}
\] where we identify $F(f)(V \otimes W)$ as a matrix with respect to a pre-chosen
basis for $V \otimes W$.
Thus we can rewrite the bispectrum as
\begin{equation} \label{eq.biexplicit}
a_2(f)(V \otimes W) = [F(f)(V) \otimes F(f)(W)]C_{V,W}[F(f)(V_1)^* \oplus
\ldots \oplus F(f)(V_r)^*]C_{V,W}^*
\end{equation}

\begin{lemma} \label{lemma.key}
  Let $V$ and $W$ be irreducible representations of $G$
  and let $V_i$ be any irreducible appearing as a summand in $V \otimes W$.
  If $f \in L^2(G)$ is chosen such that the Fourier coefficients
  $F(f)(V)$ and $F(f)(W)$ are invertible, then the Fourier coefficient
  $F(f)(V_i)$ is determined by 
$F(f)(V)$, $F(f)(W)$
  and the coefficient $a_2(f)(V \otimes W)$ of the bispectrum.
\end{lemma}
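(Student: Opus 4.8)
The plan is to invert the explicit relation \eqref{eq.biexplicit}. Write $A = F(f)(V)\otimes F(f)(W)$, $C = C_{V,W}$, and $D = F(f)(V_1)^{*}\oplus\cdots\oplus F(f)(V_r)^{*}$ for the block‑diagonal matrix occurring in \eqref{eq.biexplicit}, so that $a_2(f)(V\otimes W) = A\,C\,D\,C^{*}$. First I would note that the hypothesis that $F(f)(V)$ and $F(f)(W)$ are invertible forces $A$ to be invertible, with $A^{-1} = F(f)(V)^{-1}\otimes F(f)(W)^{-1}$, and that $C$ is invertible since it is unitary. Hence one can solve \eqref{eq.biexplicit} for $D$:
\[
D \;=\; C^{*}A^{-1}\,a_2(f)(V\otimes W)\,C \;=\; C_{V,W}^{*}\bigl(F(f)(V)^{-1}\otimes F(f)(W)^{-1}\bigr)a_2(f)(V\otimes W)\,C_{V,W},
\]
so $D$ is completely determined by $F(f)(V)$, $F(f)(W)$, and $a_2(f)(V\otimes W)$.

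Next, since $D$ is block diagonal with the block corresponding to $V_i$ equal to $F(f)(V_i)^{*}$, I would simply read off that diagonal block of the matrix $D$ computed above and take its adjoint to recover $F(f)(V_i)$. The one point to verify is that this block genuinely is $F(f)(V_i)$ and not some basis‑dependent variant: this follows from the definition \eqref{eq.fourier} together with the convention (fixed just after \eqref{eq.fourier}) that isomorphic irreducibles are assigned the same basis. With that convention, for any representation $U$ decomposed as a direct sum of irreducibles in a compatible basis one has $D_U(g) = \bigoplus_j D_{V_{i_j}}(g)$, hence $F(f)(U) = \bigoplus_j F(f)(V_{i_j})$; in particular each occurrence of $V_i$ in $V\otimes W$ contributes the same matrix $F(f)(V_i)$, which also disposes of the case where $V_i$ appears with multiplicity greater than one.

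Every step is a routine manipulation of \eqref{eq.biexplicit}, so there is no serious obstacle; the only thing requiring care is the invertibility argument for $A$, and this is exactly what makes the hypothesis on $F(f)(V)$ and $F(f)(W)$ indispensable—without it the linear map $D\mapsto A\,C\,D\,C^{*}$ need not be injective and $F(f)(V_i)$ could not in general be recovered from $a_2(f)(V\otimes W)$.
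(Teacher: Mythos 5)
Your proposal is correct and follows essentially the same route as the paper: use invertibility of $F(f)(V)\otimes F(f)(W)$ (together with unitarity of $C_{V,W}$) to solve equation \eqref{eq.biexplicit} for the block-diagonal matrix $\bigoplus_j F(f)(V_j)^*$, then read off the block corresponding to $V_i$. Your write-up is in fact slightly more careful than the paper's, which places the inverse on the wrong factor in its displayed formula and does not comment on the multiplicity issue you address.
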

\begin{proof}

Since
 \[a_2(f)(V \otimes W) =  [F(f)((V) \otimes F(f)(W)]
 [F(f)(V \otimes W)]^*\]
 and $F(f)(V) \otimes F(f)(W)$ is invertible by hypothesis, we obtain
 \[
  F(f)(V \otimes W)^* =  [F(f)(V) \otimes F(f)(W)]^{-1}a_2(f)(V \otimes W).\]
  Using the decomposition
  \[F(f)(V_1 \otimes V_2) = C_{V,W} [F(f)(V_1) \oplus \ldots \oplus F(f)(V_r)]C_{V,W}^*\] where $V_1, \ldots , V_r$ are irreducible yields the lemma.
\end{proof}
\begin{remark}
  The value of Lemma \ref{lemma.key} is that it shows that if $V,W$ are irreducibles and $V_i$ is an irreducible summand appearing in $V \otimes W$
  then $F(f)(V_i)$ is determined from $F(f)(V)$, $F(f)(W)$
  and the bispectrum coefficient $a_2(f)(V \otimes W)$. We will use this observation repeatedly.
  \end{remark}

\begin{proposition}\cite[Theorem 5]{smach2008generalized} 
  If the Fourier coefficients of $f \in L^2(G)$ are all non-singular
  then $f$ is uniquely determined by its bispectrum
\end{proposition}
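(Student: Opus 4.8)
The plan is to recover the Fourier coefficients $F(f)(V)$, $V$ irreducible, from the bispectrum and then appeal to Fourier inversion. Since the bispectrum is the Fourier transform of the translation-invariant moment $m_3(f)$, it is itself unchanged under $f\mapsto g\cdot f$ (immediate from \eqref{eq.bispectrum} together with $D_V(g)\otimes D_W(g)=D_{V\otimes W}(g)$), so the conclusion has to be read up to a single left translation, and that is what I would prove.

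First I would extract the low-order data. Non-singularity forces $F(f)(V_0)=\int_G f\,dg\neq 0$, and since $V_0\otimes V_0\cong V_0$ formula \eqref{eq.bispectrum} gives $a_2(f)(V_0\otimes V_0)=|F(f)(V_0)|^2\,F(f)(V_0)$; because $z\mapsto|z|^2z$ is injective on $\C\setminus\{0\}$ this recovers $F(f)(V_0)$, just as in the proof of Proposition \ref{prop.3to2}. Since $V_0\otimes V\cong V$ for every irreducible $V$, the same formula gives $a_2(f)(V_0\otimes V)=F(f)(V_0)\,F(f)(V)F(f)(V)^*$, so $F(f)(V)F(f)(V)^*$ is determined for every $V$ (this can alternatively be deduced from Propositions \ref{prop.3to2} and \ref{prop.secondmoment} applied to $L^2(G)=\bigoplus_V V^{\dim V}$). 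As each $F(f)(V)$ is invertible, $F(f)(V)$ is now pinned down up to right multiplication by a unitary $Q_V$, with $Q_{V_0}=1$.

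Now let $h\in L^2(G)$ have non-singular Fourier coefficients and the same bispectrum as $f$, and write $F(h)(V)=F(f)(V)Q_V$. Substituting this into $a_2(h)(V\otimes W)=a_2(f)(V\otimes W)$, expanding both sides by \eqref{eq.biexplicit}, and cancelling the invertible matrices $F(f)(V)\otimes F(f)(W)$ and $F(f)(V_i)$, the identity collapses to
\[
Q_V\otimes Q_W \;=\; C_{V,W}\bigl(Q_{V_1}\oplus\cdots\oplus Q_{V_r}\bigr)C_{V,W}^{*}
\]
for every pair of irreducibles $V,W$, where $V\otimes W\cong V_1\oplus\cdots\oplus V_r$. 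Extending $\{Q_V\}$ to all finite-dimensional representations by direct sums, this says that $\{Q_V\}$ is a family of unitary automorphisms compatible with tensor products; compatibility with arbitrary intertwiners then follows from Schur's lemma and the displayed relation once a representative of each isomorphism class of irreducibles is fixed. In other words $\{Q_V\}$ is a unitary monoidal natural automorphism of the forgetful functor $\mathrm{Rep}(G)\to\mathrm{Vec}$. By Tannaka-Krein reconstruction for compact groups, every such family has the form $Q_V=D_V(g)^{*}$ for a unique $g\in G$; then $F(h)(V)=F(f)(V)D_V(g)^{*}=F(g\cdot f)(V)$ for all $V$, and hence $h=g\cdot f$ since a function in $L^2(G)$ is determined by its Fourier coefficients \cite[Theorem 31.5]{hewitt1970abstract}.

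The routine parts are the algebraic cancellation that produces the displayed relation and the frequency-marching bookkeeping needed for the hands-on substitute for Tannaka-Krein: pick a faithful representation $V_1$ of the compact Lie group $G$ (so that the irreducible summands of its tensor powers exhaust all irreducibles of $G$), apply Lemma \ref{lemma.key} repeatedly to express every $Q_V$ in terms of $u:=Q_{V_1}$, and verify that the displayed relations force $u^{\otimes k}$ into the commutant of $\End_G(V_1^{\otimes k})$ for every $k$. The genuine obstacle is the final move, namely promoting the tensor-compatible family $\{Q_V\}$ to a single element of $G$. This is the only step that uses the global structure of the group, and however it is packaged (Tannaka-Krein duality, or the double commutant theorem together with a limiting argument), it is the one ingredient of the proof that is not purely formal.
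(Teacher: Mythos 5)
Your argument is essentially the paper's own: recover each $F(f)(V)$ up to a right unitary factor from the bispectrum coefficients involving the trivial representation, show that equality of bispectra forces the family of unitaries to be compatible with tensor products and intertwiners, and invoke Tannaka--Krein duality to identify that family with $\{D_V(g)\}$ for a single $g\in G$. Your explicit remark that the conclusion must be read up to a single translation matches what the paper's proof actually establishes, and your extra details (recovering $F(f)(V_0)$ via injectivity of $z\mapsto|z|^2z$, the hands-on frequency-marching substitute for Tannaka--Krein) are refinements rather than a different route.
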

\begin{proof}[Proof Sketch]
We first observe that for every irreducible representation $V$,
$a_2(f)(V \otimes {\bf 1}) = F(f)(V) F(f)(V)^*$
where ${\bf 1}$ denotes the trivial representation. Hence we know
the matrices $F(f)(V)$ up to multiplication by some unknown unitary matrix
$u(V)  \in U(V)$. In particular
if $h$ is a function with the same bispectrum as $f$
then $F(h)(V) = F(f)(V) u(V)$ for all irreducible representations
$V$. The goal is to show that these unitary matrices are all
of the form $D_V(g)$ for a fixed $g \in G$.

Since we know that the function $h$ whose  Fourier coefficient
$F(h)(V) = F(f)(V)u(V)$ has the same bispectrum
as $f$ we see that
\begin{multline} [F(f)(V) \otimes F(f)(W)] [(F(f)(V \otimes W))]^* = \\
  [F(f)(V)u(V) \otimes F(f)(W)u(W)][F(f)(V \otimes W)u(V \otimes W))]^*
\end{multline}
Since we assume that Fourier coefficients are all invertible we can conclude
that $u(V) \otimes u(W) = u(V \otimes W)$. Moreover, if
$A \colon V \to W$ is an intertwining operator between representations;
i.e. a $G$-invariant element of $\Hom(V,W)$ then $Au(V) = u(W)A$.

These facts imply that $u(V) = D_V(g)$ for some fixed element $g \in G$,
by Tannaka-Krein duality \cite[Theorem 30.43]{hewitt1970abstract}.
\end{proof}

\section{Banding functions for simple compact Lie groups}
The goal of this section is to introduce the notion of 
band-limited functions on compact Lie groups, generalizing the usual notion of band limited functions on $S^1$. We refer the reader to Appendix~\ref{sec.liegroup} for some of the basic terminology in the theory of compact Lie groups.

For functions on $S^1$ the notion of band-limiting is well understood.
We say that $f \in L^2(S^1)$ is $b$-band-limited if the Fourier coefficient
$f_n = \int_{S^1} e^{i n \theta} f(\theta)\;d\theta = 0$ for $|n| > b$ where
the functions $\{ e^{i n \theta}\}_{n \in \Z}$ form an orthonormal basis for $L^2(S^1)$. For functions on $S^2$ there is also a corresponding notion of
banding using the fact that any $f \in L^2(S^2)$ can be expanded in terms
of spherical harmonics $\{Y_{\ell}^m(\phi, \theta)\}$ where $\ell \in \N$ and
for each $\ell$, $m$ ranges from $-\ell$ to $\ell$. In this context
we say that $f = \sum_{\ell, m} a_{\ell, m} Y_\ell^m$ is $L$-band limited
if $a_{\ell,m} = 0$ for all $\ell > L$ and all $m \in [-\ell, \ell]$.
Band limited functions on $S^2$ can be understood in terms of the representation
theory of $\SO(3)$ as follows. The space of functions $L^2(S^2)$ decomposes
as a representation of $\SO(3)$ into an infinite sum of irreducibles
$\oplus_{\ell \geq 0} V_\ell$ where $V_\ell$ is the $(2\ell +1)$-dimensional
irreducible representation of $\SO(3)$ spanned by the functions $\{Y_\ell^m\}_{m=-\ell, \ldots \ell}$. With this notation any $f \in V_\ell$ has {\em band $\ell$} and the space of $L$-band-limited functions is the finite-dimensional
representation $\oplus_{\ell=0}^L V_\ell$.

Using the theory of highest-weight vectors we show that the irreducible
representations of a large class of Lie groups, including the classical
groups $\SU(n), \SO(2n), \Sp(n)$ can be banded. However, for groups
of rank more than one, there will be more than one representation with
a given band $b$. Nevertheless,
every irreducible representation of a given band $b$ appears as a summand
in the tensor product of irreducible representations of lower band.
As a consequence we can use a generalized
frequency marching argument to show (Proposition \ref{prop.bandone})
that if suitable Fourier coefficients
are invertible then the Fourier coefficients of irreducible representations
of band $b > 1$ can be recovered from the bispectrum and the Fourier coefficients of the irreducible representations of band one. For the classical groups,
$\SU(n), \SO(n), \Sp(2n)$ we can go further and prove
in Theorem \ref{thm.classical} that 
the Fourier coefficients of the irreducible representations of band one can be ordered in such a way that
they are determined from the bispectrum and the Fourier coefficient
of a single representation which we call the {\em defining representation}.
The defining representation corresponds to the smallest realization
of the particular group as a group of unitary matrices. For $\SU(n)$ and $\SO(n)$ it
is an $n$-dimensional representation, while for $\Sp(2n)$ it is a $2n$-dimensional representation.

%To define the general notion of band-limited function we need some terminology %from representation theory.

\subsubsection{Fundamental representations and banding for simply connected groups}
We begin with the case that $G$ is simply connected which is
true for $G = \SU(n)$ or $G=\Sp(n)$. Any compact Lie group
is the maximal compact subgroup of a corresponding complex algebraic
group $G_\C$ \cite[Propositions 8.3, 8.6]{brocker1995representations}.
When $G$ is simply
connected, the representations of $G$ correspond to the representations
of ${\mathfrak g}$ where ${\mathfrak g}$ is the Lie algebra of
$G_\C$ \cite[Theorem 3.7]{hall2015lie}.
Since ${\mathfrak g}$ is a semi-simple Lie algebra any
representation decomposes into a sum of irreducible representations.
Each irreducible representation decomposes as a sum of
{\em weight spaces}. These are the common eigenspaces for the
action of the Cartan subalgebra ${\mathfrak h}$ which
is the maximal abelian Lie subalgebra of ${\mathfrak g}$. The dimension
of ${\mathfrak h}$ is called the {\em rank} of ${\mathfrak g}$.
The eigenvalues for the action of ${\mathfrak h}$ on
all representations of ${\mathfrak g}$ generate a lattice
in ${\mathfrak h}^*$ called the weight lattice $\Lambda_W$.
The eigenvalues for the action of ${\mathfrak g}$ on itself are
called {\em roots} and they generate
the root lattice $\Lambda_R$ in ${\mathfrak h}^*$.
The root lattice $\Lambda_R$ has finite index in the weight lattice
$\Lambda_W$ and $\Lambda_R$ is the dual lattice to $\Lambda_W$
with respect to a natural inner product on ${\mathfrak h}^*$.
The set of roots $\Phi$ of ${\mathfrak g}$ can be divided (by choice
of a hyperplane in ${\mathfrak h}^*$) into positive and negative roots.
The {\em positive simple roots} form basis for the root lattice
characterized by the property that any positive root is a non-negative
integral linear combination of the positive simple roots.
A weight vector $\lambda \in \Lambda_W$ is {\em dominant}
if its inner product with every positive simple root is non-negative.
Any dominant weight vector is a
non-negative integral linear span of the {\em fundamental weights}
$\omega_1, \ldots , \omega_n$ where $n$ is the rank of the Lie
algebra ${\mathfrak g}$.

Irreducible representations of a semi-simple Lie algebra
are determined by their {\em highest weight vectors},
which is the unique dominant weight of the irreducible
representation which maximizes the sum of the inner products
with the fundamental weights. Moreover, any dominant weight 
$\lambda= a_1 \omega_1 + \ldots + a_n\omega_n$ with $a_i \in \N$
is the highest weight vector for a unique irreducible
representation $V_\lambda$ \cite[p.205]{fulton1991representation}.

\begin{defn}
When $G$ is simply connected we define the {\em band} of
the representation $V_\lambda$ with highest weight vector $\lambda = a_1\omega_1 + \ldots + \omega_n$
to be $b = a_1 + \ldots + a_n$.
\end{defn}
\subsubsection{Type $A_{n-1}$ - the group $\SU(n)$}
The compact group $\SU(n)$ is a compact form the algebraic group
$\SL(n,\C)$ and because $\SU(n)$ is simply connected
representations of $\SU(n)$ bijectively correspond
to representations of the complex Lie algebra $\mathfrak{sl}_{n}$ which has type
$A_{n-1}$ for $n \geq 2$. The Lie algebra $\mathfrak{sl}_n$ is the
vector space of traceless $n \times n$ complex matrices and the
Cartan subalgebra ${\mathfrak h}$ is the subspace of diagonal traceless
matrices. 

The weight lattice 
is the lattice spanned by vectors $L_1, \ldots , L_{n-1}, L_n$ with
$L_1 + \ldots + L_n =0$, where $L_i$ is the function on ${\mathfrak h}$
which reads the $i$-th entry along the diagonal.
In this case the positive simple roots are
\[ L_1-L_2, \ldots , L_{n-1} -L_n\]
and the fundamental weights
 are \cite[p. 216]{fulton1991representation}
\[
\omega_i=\sum_{j\leq i}L_j
\]
for $i<n$.
The irreducible representation corresponding to $\omega_1$ is the $n$-dimensional defining representation $V$;
i.e. the representation $\SU(n) \subset U(n)$. The other representations
of band one (i.e. those associated to the fundamental representations)
are the exterior powers $V_k = \wedge^k V$ for $i =1, \ldots , n-1$. (Note
that $\wedge^nV$ is the trivial representation
which is consistent with the fact that $L_1 + \ldots L_n =0$.)
\begin{example}[The groups $\SU(2)$, and $\SU(3)$]
 The group $\SU(2)$ has rank-one and
  there is a single representation
  of each band, namely the representation $V_n = \sym^n V_1$ where
  $V_1$ is the two-dimensional defining representation and $V_n = \sym^n V_1$ can be identified
  with the vector space of homogeneous binary forms of degree $n$.

  By contrast, the group $\SU(3)$ has two irreducible representations
  of band-one, the defining representation $V_1$ and $\wedge^2V_1 \simeq V_1^*$.
  The irreducible representations of band $b$ can be indexed by pairs of non-negative
  integers $(n_1, n_2)$ with $n_1 + n_2 = b$. Hence there are $b+1$ irreducible
  representations of each band. The dimension of the irreducible
  representation $\Gamma_{n_1,n_2}$ with highest weight vector
  $n_1\omega_1 + n_2 \omega_2$ is $\frac{(n_1 + 1)(n_2 +1)(n_1 + n_2 + 2)}{2}$
  \cite[Formula (15.17), p.~224]{fulton1991representation}. For example,
  the three representations of band two $\Gamma_{2,0}, \Gamma_{1,1}, \Gamma_{0,2}$
have dimensions $6, 8, 6$ respectively. Explicitly 
$\Gamma_{2,0} = \Sym^2 V_1$, $\Gamma_{0,2} = \Sym^2 V_1^*$
and 
$\Gamma_{1,1}$ is the kernel of the pairing $V_1 \otimes V_1^* \to \C$ defined by 
$v \otimes f \mapsto f(v)$. 
  \end{example}
\subsubsection{Type $C_n$- the group $\Sp(n)$}
The compact symplectic group $\Sp(n)$ is the intersection of
the complex symplectic group $\Sp(2n, \C)$ with the unitary group
$U(2n,\C)$. Since this group is simply connected the irreducible
representations of $\Sp(2n)$ are the same as the irreducible representations
of the complex Lie algebra $\mathfrak{sp}_{2n}$ which has type $C_n$ in
the classification of Lie algebras.
In this case the weight lattice is freely generated by vectors  $L_1,\dots,L_n$. The positive simple roots are
\[
L_1-L_2,\dots,L_{n-1}-L_n,2L_n
\]
and the fundamental weights are
\[
\omega_i=\sum_{j\leq i}L_j
\]
for $i\leq n$ \cite[Section 17.1]{fulton1991representation}.
There are $n$ irreducible representations of band one 
and the irreducible representation $V_k$ with highest weight vector
$\omega_k$ is the kernel
of the contraction map $\wedge^k V \to \wedge^{k-2}V$ \cite[Theorem 17.5]{fulton1991representation}.

\subsection{The non-simply connected groups $\SO(n)$}
As discussed in \cite[Section 23.1]{fulton1991representation}
a general simple Lie group $G$ has a finite abelian fundamental group,
so it is a quotient $\tilde{G}/Z$ where $\tilde{G}$ is simply connected
and $Z$ is a finite abelian group. Any
irreducible representation of $G$ is an irreducible representation of
$\tilde{G}$ but not every irreducible representation of $\tilde{G}$
descends to an irreducible representation of $G$. The set of
weights of representations of $G$ forms a sublattice of the set of
weights of representations of the universal cover $\tilde{G}$ of
finite index. In this case there need not be an analogue
of fundamental weights. That is, we cannot guarantee that the weight
lattice of $G$ has a basis $\omega_1, \ldots, \omega_n$  such that
every highest weight vector can be written as non-negative integral linear
combination of the $\omega_i$.
For the group $\SO(2n+1)$ it is possible to
find such weights, and thus we can define the band of an irreducible representation as above. For the group $\SO(2n)$ the weight lattice does not have fundamental system of weights. Despite this, we are still able to define the band
of an irreducible representation as we see below.

\subsubsection{Type $B_n$ - the group $\SO(2n+1,\R)$}
The compact group $\SO(2n+1,\R)$ is a compact form of the complex group
$\SO(2n+1,\C)$ with Lie algebra $\mathfrak{so}_{2n+1}$ which has type $B_n$.
The root system of type $B_n$ has weight space generated by $L_1,\dots,L_n$. The positive simple roots are \cite[Section 19.4]{fulton1991representation}
\[
\alpha_1=L_1-L_2,\dots,\alpha_{n-1}=L_{n-1}-L_n,\alpha_n=L_n
\]
and the fundamental weights are
\[
\omega_i=\sum_{j\leq i}L_j
\]
for $i<n$ and $\omega_n=\frac{1}{2}\sum_{j=1}^n L_j$.

Note that because $\SO(2n+1)$ is not simply connected not every irreducible
representation of $\mathfrak{so}_{2n+1}$ gives rise to an irreducible representation
of $\SO(2n+1)$. The representations of the Lie algebra $\mathfrak{so}_{2n+1}$ are in bijective correspondence with the representations of the simply connected
spin group $\Spin(2n+1)$ and the irreducible representations of $\SO(2n+1)$
are exactly the representations with highest weight vectors
$a_1\omega_1 +  \ldots  + a_{n-1}\omega_{n-1} + a_n \omega_n$ where $a_n$
is required to be even \cite[Proposition 23.13(iii)]{fulton1991representation}. In
particular we can take 
the vectors
$\omega_k = L_1 + \ldots + L_k$ for $k < n$
and $\omega'_n = 2\omega_n =L_1 + \ldots + L_n$ to be a set of fundamental weights for the Lie group $\SO(2n+1)$.

The representations $V_1, \ldots , V_n$ associated
to the fundamental weights are the exterior powers $V_1 = \wedge^1 V, \ldots ,
V_n = \wedge^nV$ where $V$ is the defining representation of $\SO(2n+1)$. Note that $V$ has dimension $2n+1$ \cite[Theorem 19.14]{fulton1991representation}.
%while the module with highest weight
%$\omega'_n$ is a submodule of $\wedge^nV$ (check this).
\begin{example}
  The group $\SU(2)$ is the universal cover of $\SO(3)$. Since
  these groups have rank one, the weights are integers. For $\SU(2)$ the fundamental weight is $\omega = 1$, while for $\SO(3)$ the fundamental weight
  is $\omega = 2$. As a result, the irreducible $\SO(3)$-representation
  of band $b$, which has dimension $2b+1$, is the same as the irreducible
  representation of $\SU(2)$ of band $2b$. Note that the weight lattice for $\SO(3)$ has index $2$ in the weight lattice for $\SU(2)$, corresponding to the fact that $\SU(2)\to\SO(3)$ is a $2$-to-$1$ cover.
\end{example}

\subsection{Type $D_n$ - the group $\SO(2n,\R)$} The group
$\SO(2n,\R)$ is the compact form of $\SO(2n,\C)$ whose
Lie algebra $\mathfrak{so}_{2n}$ has type $D_n$. Its weight space is generated by $L_1,\dots,L_n$. The positive simple roots are
\[
\alpha_1=L_1-L_2,\dots,\alpha_{n-1}=L_{n-1}-L_n,\alpha_n=L_{n-1}+L_n
\]
and the fundamental weights are \cite[Section 19.2]{fulton1991representation}
\[
\omega_i=\sum_{j\leq i}L_j
\]
for $i<n-1$, 
\[
\omega_{n-1}=\frac{1}{2}\sum_{j=1}^nL_j, \quad\textrm{and}\quad \omega_n=\frac{1}{2}(L_1+\dots+L_{n-1}-L_n).
\]
Once again, $\SO(2n)$ is not simply connected so not every irreducible representation of $\mathfrak{so}_{2n}$
gives rise to a representation of the Lie group $\SO(2n)$.
The irreducible representations of $\SO(2n)$ are precisely those with highest weight vector 
$\sum_{i=1}^na_i\omega_i$ where $a_{n-1}+a_n$ is even \cite[Proposition 23.13(iii)]{fulton1991representation}. 
In this case every highest weight vector can be expressed non-uniquely
as non-negative
linear combination of the weights
$\omega_1, \ldots , \omega_{n-2}$ and
\[ \omega'_{n-1} = \omega_{n-1} + \omega_n = L_1 + \ldots + L_{n-1}, \]
\[ \omega'_n = 2\omega_{n-1}= L_1 + \ldots + L_n,\]
\[ \omega'_{n+1} =2\omega_n = L_1 + \ldots + L_{n-1} -L_n.\]
If $\lambda$ is a highest weight vector for an irreducible
representation of $\SO(2n)$ and we write 
\[ \lambda = b_1 \omega_1 + \ldots + b_{n-2}\omega_{n-2} + b_{n-1}\omega'_{n-1}
+ b_{n}\omega'_n + b_{n+1}\omega'_{n+1}\]
then, although the non-negative integers
$b_i$ are not unique, the sum $\sum_{i=1}b_i$ is independent of the
choice of $b_i$'s. 
For example,
the weight $2L_1 + \ldots 2L_{n-1}$ has band 2 since it can be expressed
as $2\omega'_{n-1}$ or as $\omega'_n + \omega'_{n+1}$. Hence
we can define the band of $\lambda$ to be $\sum_{i=1}^n b_i$.
In particular, the $n+1$ irreducible representations with highest weights $\omega_1, \ldots , \omega_{n-2},
\omega'_{n-1}, \omega_n'$ all have band one.

By \cite[Remark page 289]{fulton1991representation}, if $k \leq n-2$ then
the representation $V_k$ with highest weight vector $\omega_k$ is the exterior product $\wedge^k V$. Likewise, the representation
$V_{n-1}$ with highest weight $\omega'_{n-1}$ is
exterior power $\wedge^{n-1} V$. Finally the exterior power $\wedge^nV$
is the sum of the representations $V_{n}$ and $V_{n+1}$ which have highest weights
$\omega'_n$ and $\omega'_{n+1}$ respectively.

\begin{example}[Irreducible representations of $\SO(4)$]
  The group $\SO(4)$ has rank two and there are three representations
  of band one.  The defining representation $V_1$ has dimension four
  and the two additional representations of band-one, 
  $V_2', V_3'$, are both three-dimensional. The sum  $V_2'\oplus V_3'$
  equals $\wedge^2 V_1$. If we choose an orthonormal
  basis $e_1, e_2, e_3, e_4$ for $V_1$ then $V_2'$ is the span
  of the vectors 
  $$e_1 \wedge e_2 + e_3 \wedge e_4,\quad e_1 \wedge e_4 + e_2 \wedge e_3,\quad
  e_1 \wedge e_3 - e_2 \wedge e_4$$ 
  in $\wedge^2 V_1$, while
  $V_3'$ is the span of 
  $$e_1 \wedge e_2 - e_3 \wedge e_4,\quad e_1 \wedge e_4 - e_2 \wedge e_3,\quad 
  e_1 \wedge e_3 + e_2 \wedge e_4$$ in $\wedge^2 V_1$.
\end{example} 

%\begin{definition}
%  The band of an irreducible representation $V_\lambda$ is
%  the sum $b(\lambda)  = a_1 + \ldots +a_n$. Note that the trivial representation has band $0$ and a representation has band one iff its highest weight vector is a fundamental weight.
\subsection{Reduction to representations of band one}

\begin{proposition} \label{prop.bandone}
  Let $G$ be a compact Lie group whose irreducible representations
  can be banded.  If $f \in L^2(G)$ and $W$ is an
irreducible representation of band $b>1$ then the Fourier coefficient
$F(f)(W)$ is determined from the bispectrum coefficients $a_2(f)(W_1
\otimes W_{b-1})$ and the Fourier coefficients of $F(f)(W_1)$,
$F(f)(W_{b-1})$ for some $W_1$ of band one and some $W_{b-1}$ of band
$b-1$, provided that these coefficients are invertible.
\end{proposition}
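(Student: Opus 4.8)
The plan is to realize $W$ as an irreducible direct summand of a tensor product $W_1 \otimes W_{b-1}$ with $W_1$ of band one and $W_{b-1}$ of band $b-1$, after which the statement follows immediately from Lemma~\ref{lemma.key}.

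First I would write $W = V_\lambda$ for its highest weight $\lambda$. By hypothesis the irreducible representations of $G$ can be banded, so $\lambda$ is a non-negative integral combination $\lambda = \sum_i c_i \eta_i$ of the band-one weights $\eta_i$ (the fundamental weights when $G$ is simply connected, the system $\omega_1,\dots,\omega_{n-1},\omega'_n$ for $G=\SO(2n+1)$, and the system $\omega_1,\dots,\omega_{n-2},\omega'_{n-1},\omega'_n,\omega'_{n+1}$ for $G=\SO(2n)$ described above), with band $b = \sum_i c_i$. Since $b > 1$ some coefficient is positive, say $c_1 \geq 1$; put $\eta := \eta_1$ and $\mu := \lambda - \eta_1 = (c_1 - 1)\eta_1 + \sum_{i \geq 2} c_i \eta_i$. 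Then $\mu$ is again a non-negative integral combination of band-one weights, hence is the highest weight of an honest irreducible representation $W_{b-1} := V_\mu$ of $G$, and by construction its band is $b - 1$; also $W_1 := V_\eta$ has band one.

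The key input is the standard fact that for dominant weights $\mu$ and $\eta$ the irreducible representation $V_{\mu + \eta}$ occurs as a direct summand of $V_\mu \otimes V_\eta$: indeed $\mu + \eta$ is the highest weight of $V_\mu \otimes V_\eta$ and the corresponding weight space is one-dimensional, spanned by the tensor of highest weight vectors, so $V_{\mu+\eta}$ appears with multiplicity one (the Cartan component). Applying this to our $\mu$ and $\eta$ shows that $W = V_\lambda = V_{\mu+\eta}$ is a summand of $W_1 \otimes W_{b-1}$. Now Lemma~\ref{lemma.key}, applied with the irreducibles $W_1$, $W_{b-1}$ and the summand $W$ of $W_1 \otimes W_{b-1}$, shows that $F(f)(W)$ is determined by $F(f)(W_1)$, $F(f)(W_{b-1})$ and the bispectrum coefficient $a_2(f)(W_1 \otimes W_{b-1})$, provided $F(f)(W_1)$ and $F(f)(W_{b-1})$ are invertible (equivalently $F(f)(W_1)\otimes F(f)(W_{b-1})$ is invertible, which is exactly the hypothesis of Lemma~\ref{lemma.key}).

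The only point requiring genuine care is the bookkeeping for the non-simply-connected groups $\SO(2n+1)$ and $\SO(2n)$: one must check that the weight $\mu = \lambda - \eta$ obtained by peeling off a band-one weight really is the highest weight of a representation of $G$ itself (not merely of its universal cover) and that its band is exactly $b - 1$. The first follows because $\mu$ is a non-negative integral combination of the explicitly listed band-one weights, each of which satisfies the relevant parity condition of \cite[Proposition 23.13(iii)]{fulton1991representation}; the second uses that in the type $D_n$ case the sum $\sum_i c_i$ is independent of the (non-unique) way $\lambda$ is written as a non-negative combination of band-one weights, as already noted in the discussion of type $D_n$ above. Everything else is formal.
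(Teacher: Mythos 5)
Your proof is correct and follows essentially the same route as the paper's: peel a band-one weight $\eta$ off the highest weight $\lambda$ of $W$, observe that $V_\lambda$ is the Cartan component (highest-weight summand) of $V_\eta \otimes V_{\lambda - \eta}$, and apply Lemma~\ref{lemma.key}. Your added care about the parity conditions for $\SO(2n+1)$ and $\SO(2n)$ and the well-definedness of the band in type $D_n$ is a welcome refinement of a point the paper leaves implicit, but it does not change the argument.
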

\begin{proof}
Since the band of $W$ is $b$, there are weight vectors
$\omega_1, \ldots , \omega_r$ of band one such that 
the highest weight vector of $W$ is
$b_1 \omega_1 + \ldots + b_r \omega_r$ with $b_i \in \N$ and $\sum_i b_i = b$.
Since $b > 1$ we know
that one of the $b_i$'s is positive. Take $W_1$ to be the
irreducible representation with highest weight $\omega_i$ and
$W_{b-1}$ to be the irreducible representation with highest weight
$b_1\omega_1 + \ldots + (b_{i}-1)\omega_{i} + \ldots + b_r \omega_r$. Since weights are additive on tensor products, $b_1 \omega_1 + \ldots + b_r \omega_r$ is the highest weight in the tensor product $W_1 \otimes W_{b-1}$. Hence
  $V$ is a summand in this tensor. Therefore by Lemma \ref{lemma.key}
  we conclude that $F(f)(W)$ is determined.
\end{proof}

\begin{corollary} \label{cor.halfband}
  With the hypotheses on $G$ as above, if $f \in L^2(G)$ is a function which is band-limited at band $b > 1$
  and the Fourier coefficients of all representations of band $0 \leq i \leq \lceil b/2 \rceil$ are invertible, then all Fourier coefficients can be determined
  from the Fourier coefficients of band one and the bispectrum.
\end{corollary}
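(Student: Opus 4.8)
The plan is to argue by strong induction on the band $\beta$ of an irreducible representation $W$, proving that $F(f)(W)$ is determined by the collection of band-one Fourier coefficients together with the bispectrum, for every $W$ with $\beta \le b$. This suffices: since $f$ is band-limited at band $b$ we have $F(f)(W)=0$ whenever $\beta>b$, and a function in $L^2(G)$ is determined by its full set of Fourier coefficients, so in fact $f$ itself is recovered. For the base cases $\beta\le 1$ the band-one coefficients are part of the given data, while the scalar band-zero coefficient $c=F(f)(\mathbf 1)$ is extracted from $a_2(f)(\mathbf 1\otimes\mathbf 1)=|c|^2c$ exactly as in Proposition~\ref{prop.3to2}, using that $c\neq 0$ because $0\le\lceil b/2\rceil$ places band $0$ in the range where invertibility is assumed.

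For the inductive step I would fix $W$ of band $\beta$ with $2\le\beta\le b$ and split its highest weight rather than peel off one band at a time. Write the highest weight of $W$ as $\lambda=\sum_i b_i\eta_i$, where $\eta_1,\dots,\eta_s$ are the highest weights of the band-one irreducibles and $b_i\in\N$ with $\sum_i b_i=\beta$. Choose $0\le c_i\le b_i$ with $\sum_i c_i=\lfloor\beta/2\rfloor$ (possible since the $b_i$ sum to $\beta$), and set $\lambda_1=\sum_i c_i\eta_i$, $\lambda_2=\lambda-\lambda_1$. These are dominant weights of bands $\lfloor\beta/2\rfloor$ and $\lceil\beta/2\rceil$; for $\beta\ge 2$ both of these integers lie in $[1,\beta-1]$, and since $\lceil\cdot/2\rceil$ is nondecreasing both are at most $\lceil b/2\rceil$. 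By the inductive hypothesis $F(f)(W_{\lambda_1})$ and $F(f)(W_{\lambda_2})$ are already determined from the band-one coefficients and the bispectrum, and by hypothesis they are invertible. Since weights add on tensor products, $\lambda$ is the highest weight occurring in $W_{\lambda_1}\otimes W_{\lambda_2}$, so $W$ is an irreducible summand of it; applying Lemma~\ref{lemma.key} to the pair $W_{\lambda_1}$, $W_{\lambda_2}$ (with $W$ as the summand) then determines $F(f)(W)$ from $F(f)(W_{\lambda_1})$, $F(f)(W_{\lambda_2})$ and $a_2(f)(W_{\lambda_1}\otimes W_{\lambda_2})$, closing the induction.

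The only genuine idea is the halving: splitting $\lambda$ into two pieces of almost equal band rather than repeatedly extracting a band-one factor as in Proposition~\ref{prop.bandone}. The naive one-band-at-a-time recursion would force an invertibility hypothesis through band $b-1$, whereas halving keeps both pieces at band at most $\lceil b/2\rceil$, which is exactly the stated hypothesis. Consequently I do not expect a serious obstacle; past the halving observation the work is bookkeeping---verifying $1\le\lfloor\beta/2\rfloor$ and $\lceil\beta/2\rceil\le\beta-1$ for $\beta\ge 2$ so the recursion is well-founded and strictly decreasing, checking monotonicity of $\lceil\cdot/2\rceil$ so that $\beta\le b$ keeps both halves inside the invertibility range, and then the by-now-routine invocation of Lemma~\ref{lemma.key}.
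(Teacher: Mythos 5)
Your argument is correct and takes essentially the same approach as the paper: the key idea in both is to split the highest weight into two pieces each of band at most $\lceil b/2\rceil$, so that both tensor factors fall within the invertibility hypothesis, and then apply Lemma~\ref{lemma.key} to the summand $W$ of $W_{\lambda_1}\otimes W_{\lambda_2}$. The only cosmetic difference is that the paper determines the two halves via Proposition~\ref{prop.bandone} (peeling off one band at a time) plus induction, whereas you recursively halve again; both recursions remain inside the band-$\lceil b/2\rceil$ range where invertibility is assumed.
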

\begin{proof}
Let $W$ be an irreducible representation with highest weight vector
$ \lambda = b_1 \omega_1 + \ldots + b_r \omega_r$
where the $b_i$ are non-negative integers and the $\omega_i$ have band one.
We can decompose the vector
  $(b_1, \ldots, b_r) = (c_1, \ldots ,c_r) + (d_1, \ldots , d_r)$
with $b_i,c_j$ non-negative integers and
$\sum c_i, \sum d_i \leq \lceil b/2 \rceil $. If
$W_1$ is the irreducible represention with highest weight
$c_1 \omega_1 + \ldots c_r \omega_r$ and $W_2$ is
the irreducible representation with highest weight $d_1 \omega_1 + \ldots + d_r \omega_r$, then
$W$ appears as a summand in the tensor product $W_1 \otimes W_2$.
Thus, if $F(f)(W_1)$ and $F(f)(W_2)$ are invertible then $F(f)(W)$
can be determined from $a_2(f)(W_1 \otimes W_2)$ and $F(f)(W_1)$, $F(f)(W_2)$ 
regardless of whether $F(f)(W)$ is invertible.
By Proposition \ref{prop.bandone} and induction $F(f)(W_1)$, $F(f)(W_2)$ can be determined from the Fourier coefficients of band one.
\end{proof}

\subsection{Reduction to the defining representation}
The previous propositions only required that the irreducible representations of
our compact Lie group $G$ be 
banded. We now state a result specific to the
classical groups $\SU(n), \SO(2n+1), \Sp(n), \SO(2n)$.
\begin{theorem} \label{thm.classical}
  Let $G$ be one of the classical compact Lie groups $\SU(n), \SO(n), \Sp(2n)$
  with defining representation $V_1$.
  If the Fourier coefficients of the irreducible representations
  of band one are all invertible, then the Fourier coefficient $F(f)(V_\ell)$ 
  of any band-one representation $V_\ell$
  is determined by $F(f)(V_1)$ and 
  the bispectrum matrices $a_2(f)(V_1 \otimes V_k)$ with $V_k$ of band-one
  and $k < \ell$.
\end{theorem}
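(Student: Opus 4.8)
\section*{Proof proposal for Theorem~\ref{thm.classical}}

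The plan is a ``generalized frequency marching'' induction. I will produce an ordering $V_1,V_2,\dots,V_r$ of the band-one irreducible representations of $G$, with $V_1$ the defining representation, such that for every $\ell>1$ the representation $V_\ell$ occurs as an irreducible summand of $V_1\otimes V_{k}$ for some $k=k(\ell)<\ell$. Granting this, the theorem follows by induction on $\ell$: there is nothing to prove for $\ell=1$, and if $F(f)(V_1),\dots,F(f)(V_{\ell-1})$ have already been determined, then $F(f)(V_1)$ and $F(f)(V_{k(\ell)})$ are known and, by hypothesis, invertible, so Lemma~\ref{lemma.key} applied to $V=V_1$, $W=V_{k(\ell)}$ and the summand $V_\ell\subset V_1\otimes V_{k(\ell)}$ recovers $F(f)(V_\ell)$ from $F(f)(V_1)$, $F(f)(V_{k(\ell)})$ and the bispectrum matrix $a_2(f)(V_1\otimes V_{k(\ell)})$. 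Since the only bispectrum coefficients used have the defining representation as their first tensor factor, this is exactly the form of recovery claimed.

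To build the ordering, recall from the case-by-case discussion above that, with the single exception of the two summands of $\wedge^n V_1$ in type $D_n$, every band-one irreducible of a classical group has highest weight an initial segment $\omega_m=L_1+\cdots+L_m$ and occurs with multiplicity one inside the exterior power $\wedge^m V_1$ of the defining representation: in types $A_{n-1}$, $B_n$, and $D_n$ with $m\le n-1$ one has $\wedge^m V_1=V_{(m)}$, whereas in type $C_n$ the band-one representation $V_{(m)}$ is the kernel of the contraction $\wedge^m V_1\to\wedge^{m-2}V_1$; in type $D_n$, $\wedge^n V_1$ splits as the sum of the two band-one irreducibles of highest weights $L_1+\cdots+L_n$ and $L_1+\cdots+L_{n-1}-L_n$. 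I order the band-one representations by the index $m$, placing in type $D_n$ both constituents of $\wedge^n V_1$ immediately after $\wedge^{n-1}V_1$; the ``earlier'' representation $V_{k(\ell)}$ will always be the band-one irreducible of highest weight $\omega_{m-1}$. The key claim is that the composite
\[
V_1\otimes V_{(m-1)}\hookrightarrow V_1\otimes\wedge^{m-1}V_1\xrightarrow{\ \wedge\ }\wedge^m V_1\twoheadrightarrow V_\ell
\]
of wedge multiplication with the projection onto the relevant band-one constituent $V_\ell$ of $\wedge^m V_1$ is nonzero; complete reducibility then forces $V_\ell$ to be a summand of $V_1\otimes V_{(m-1)}$. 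Nonvanishing is checked on an explicit weight vector: taking $e_1,\dots$ to be a basis of $V_1$ adapted to the invariant form (an arbitrary basis in type $A_{n-1}$), the highest-weight vector $e_1\wedge\cdots\wedge e_{m-1}$ of $\wedge^{m-1}V_1$ lies in the band-one constituent $V_{(m-1)}$ (trivially when that exterior power is irreducible, and in type $C_n$ because its contraction vanishes), and $e_m\otimes(e_1\wedge\cdots\wedge e_{m-1})$ maps to $\pm\,e_1\wedge\cdots\wedge e_m\ne 0$, the highest-weight vector of the top constituent of $\wedge^m V_1$. In the leftover type-$D_n$ case the vectors $e_n\otimes(e_1\wedge\cdots\wedge e_{n-1})$ and $f_n\otimes(e_1\wedge\cdots\wedge e_{n-1})$, with $f_n$ of weight $-L_n$, map respectively to $e_1\wedge\cdots\wedge e_n$ and $e_1\wedge\cdots\wedge e_{n-1}\wedge f_n$, the highest-weight vectors of the two band-one summands of $\wedge^n V_1$.

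With the ordering in hand, the induction of the first paragraph completes the proof. The only non-formal ingredient — and hence the main obstacle — is the tensor-product claim that each band-one representation other than $V_1$ appears in $V_1\otimes V_k$ for a strictly earlier band-one $V_k$; once the band-one irreducibles are identified with the constituents of the exterior powers $\wedge^m V_1$, this reduces to the elementary fact that left-wedging with a vector of $V_1$ carries a highest-weight vector of $\wedge^{m-1}V_1$ to one of $\wedge^m V_1$, together with the mild bookkeeping needed in type $C_n$ (where $\wedge^{m-1}V_1$ is not irreducible) and in type $D_n$ (the two summands of $\wedge^nV_1$). Alternatively, the same tensor-product statement follows uniformly from the Parthasarathy--Ranga Rao--Varadarajan component theorem, since $\omega_m$, and in type $D_n$ also $L_1+\cdots+L_{n-1}-L_n$, is the dominant representative of $\omega_1+w\,\omega_{m-1}$ for a suitable element $w$ of the Weyl group.
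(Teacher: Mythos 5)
Your proposal is correct, and its skeleton --- order the band-one irreducibles by the exponent $m$ of the exterior power $\wedge^m V_1$ in which they sit, then march up by repeatedly applying Lemma~\ref{lemma.key} to a summand of a tensor product with $V_1$ --- is the same as the paper's. The one place where you genuinely diverge is type $C_n$. The paper's argument there tensors $V_1$ with the full, \emph{reducible} exterior power $\wedge^k V_1 = V_k\oplus V_{k-2}\oplus\cdots$, so it must first assemble the block Fourier coefficient $F(f)(\wedge^k V_1)$; this drags in $F(f)(V_0)$ (band zero, recovered via Proposition~\ref{prop.3to2}, hence an implicit nonvanishing hypothesis on the zeroth coefficient) and uses a bispectrum matrix $a_2(f)(V_1\otimes\wedge^kV_1)$ that is not literally of the form $a_2(f)(V_1\otimes V_k)$ with $V_k$ a single band-one irreducible. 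You instead prove the sharper tensor-product statement that the band-one irreducible $V_{(m)}$ already occurs in $V_1\otimes V_{(m-1)}$ with \emph{both} factors irreducible, by checking that $e_m\otimes(e_1\wedge\cdots\wedge e_{m-1})$ survives the composite with wedge multiplication and the projection onto the kernel of contraction. This buys a proof that matches the theorem's statement verbatim, avoids the detour through $V_0$, and replaces the paper's Lemma~\ref{lem.wedge} (surjectivity of the wedge map out of the full exterior power) by a uniform highest-weight-vector computation that also handles the two summands of $\wedge^n V_1$ in type $D_n$ in one stroke; the cost is a small amount of explicit root-system bookkeeping that the paper's softer argument sidesteps. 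Both routes are valid.
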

\begin{proof}
The proof is based on a case-by-case analysis, but the overall structure is the same in each case and makes use of a simple lemma.
\begin{lemma} \label{lem.wedge}
Let $V$ be a representation of a compact Lie group $G$. Then $\wedge^{k+1}V$
appears as a $G$-invariant summand in $V \otimes \wedge^kV$.
\end{lemma}
\begin{proof}
  The action of $G$ on $V$ defines a homomorphism $G \to U(V)$ where $U(V)$
  is the group of unitary transformations of $V$. In particular, it suffices to prove the lemma when $G$ is the unitary group $U(d)$ where $d = \dim V$. The statement then follows from the fact that the map
  $V \otimes \wedge^kV \to \wedge^{k+1}V$, defined by 
  \[v \otimes (v_1 \wedge \ldots \wedge v_k) \mapsto v \wedge v_1 \ldots \wedge v_k\]
  is surjective and commutes with the respective actions of the unitary group $U(V)$ on $V \otimes \wedge^k V$ and $\wedge^{k+1} V$.
\end{proof}

\subsubsection{Type $A_{n-1}$ - the group $\SU(n)$}
As noted above the representation 
$V_k$ with highest weight
$\omega_k = L_1 + \ldots + L_k$ is $\wedge^kV_1$ where $V_1$ is the defining
representation. 

By Lemma \ref{lemma.key} and induction it suffices to show that $V_{k+1}$ is a
summand in the tensor product
$V_1 \otimes V_k$ which follows from Lemma \ref{lem.wedge}.

\subsubsection{Type $B_n$ - the group $\SO(2n+1)$}
Here we have a fundamental system of weights $\omega_1, \ldots ,\omega_{n-1}, \omega'_n$ and the corresponding irreducible representations
are the exterior product $\wedge^k V_1$ for $1 \leq k \leq n$.
Once again by Lemma \ref{lemma.key} we just need to show
that $V_{k+1}$ is a summand in $V_1 \otimes V_{k}$ which follows from Lemma \ref{lem.wedge}.

\subsubsection{Type $C_n$- the group $\Sp(2n)$}
Unlike the case of $\SU(n)$ and $\SO(2n+1)$ the irreducibles associated to the fundamental weights are not exterior powers of the defining representation $V_1$.
However, \cite[Theorem 17.5]{fulton1991representation} states
that if $k > 1$ then
$V_k$ is the kernel of the contraction map $\wedge^k V_1 \to \wedge^{k-2}V_1$.
Hence $\wedge^kV_1 = V_k \oplus V_{k-2} \oplus \ldots \oplus  V_0$ if $k$ is even
and $\wedge^kV_1 = V_k \oplus V_{k-2} \oplus \ldots \oplus V_1$ if $k$ is odd.
Since the Fourier coefficient of the trivial representation $V_0 = \wedge^0 V_1$
is known from the bispectrum by Proposition \ref{prop.3to2}, we are able to inductively determine
the Fourier coefficient $F(f)(V_k)$ from the bispectrum and knowledge of $F(f)(V_1)$ as follows: Assume by induction that we have determined the Fourier
coefficients $F(f)(V_\ell)$ for $\ell \leq k$. Since $\wedge^k(V_1) = \oplus_{\ell} V_{k-2\ell}$ we know the Fourier  $F(f)(\wedge^kV_1)$ by induction. Since
$\wedge^{k+1}V_1$ appears as a summand in $V_1 \otimes \wedge^{k}V_1$ and
$V_{k+1}$ is a summand in $\wedge^{k+1}V$ we see that $V_{k+1}$ is a summand
in $V_1 \otimes \wedge^{k}V$. Since the Fourier coefficient $F(f)(\wedge^kV_1)$
is known by induction, it follows from Lemma \ref{lemma.key} that $F(f)(V_{k+1})$
is determined by $F(f)(V_1)$ and the bispectrum coefficient
$a_2(f)(V_1 \otimes \wedge^kV_1)$.

\subsubsection{Type $D_n$ - the case of $\SO(2n)$}
In this case we have $n+1$ weights of band one
$\omega_1, \ldots , \omega_{n-2}, \omega'_{n-1}, \omega'_{n},\omega'_{n+1}$
and associated representations $V_1, \ldots , V_{n+1}$. As noted above,
$V_k = \wedge^k V_1$ for $1 \leq k \leq n-1$. Hence the same argument used in
the case of $\SO(2n+1)$ implies that the bispectrum and
$F(f)(V_{1})$ determine $F(f)(V_k)$ if $k \leq n-1$.
However we also know that $V_{n}$ and $V_{n+1}$ are summands in $\wedge^n V_1$
so we can determine them from $F(f)(V_{n-1}), F(f)(V_1)$ and the
bispectrum coefficient $F(f)(V_1 \otimes V_{n-1})$.
\end{proof}

\section{Sharp results for $\SO(2n+1), \SU(n)$}
For the groups $G= \SU(n)$ and $G=\SO(2n+1)$ we prove
that the $G$-orbit of a generic 
band limited function in $L^2(G)$ is determined by its by bispectrum.
\begin{theorem} \label{thm.suso}

(i)  If $f \in L^2(\SU(n))$ is band limited with band $b \geq 1$
  and all Fourier coefficients of irreducible representations
  whose bands are at most $\lceil b/2 \rceil$ are invertible then the $\SU(n)$ orbit of $f$ is determined by its bispectrum. 

(ii) If $f\in L^2(\SO(2n+1))$ is real valued and band limited with
    band $b \geq 1$
and all Fourier coefficients of irreducible representations
  whose band is at most $\lceil b/2 \rceil$ are invertible
  then the $\SO(2n+1)$
    orbit of $f$ is determined by its bispectrum.
\end{theorem}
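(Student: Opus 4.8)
\emph{Reduction to the defining representation.}
First I would take a second function $h$, real-valued when $G=\SO(2n+1)$, band limited at band $b$, satisfying the same invertibility hypotheses as $f$ and with the same bispectrum, and aim to produce $g\in G$ with $h=g\cdot f$. Since the trivial representation has band $0\le\lceil b/2\rceil$, the coefficient $F(f)(\mathbf 1)=\int_G f$ is nonzero, so Proposition~\ref{prop.3to2} shows the third moment of $h$ determines $m_1(h)$ and $m_2(h)$; thus $m_2(h)=m_2(f)$, and Proposition~\ref{prop.secondmoment} gives $F(h)(W)F(h)(W)^*=F(f)(W)F(f)(W)^*$ for every irreducible $W$ --- so $h$ in fact automatically inherits the invertibility of $f$. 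As $V_1$ has band $1\le\lceil b/2\rceil$, the matrix $u:=F(f)(V_1)^{-1}F(h)(V_1)$ is well defined, and the identity for $W=V_1$ forces $u\in U(V_1)$; when $f$ and $h$ are real it forces $u\in O(V_1)$ by the real case of Proposition~\ref{prop.secondmoment}. Now Corollary~\ref{cor.halfband} and Theorem~\ref{thm.classical}, together with Proposition~\ref{prop.3to2} for the trivial summand, express every Fourier coefficient of a function obeying our hypotheses as one fixed sequence of matrix operations applied to its coefficient on $V_1$ and to its bispectrum; feeding $F(h)(V_1)=F(f)(V_1)u$ and $a_2(h)=a_2(f)$ into this sequence recovers all of $h$. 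Hence $h$ is determined by $u$, and since $F(g\cdot f)(W)=F(f)(W)D_W(g)^*$ for all $W$ and $g$, it suffices to prove $u=D_{V_1}(g)^*$ for some $g\in G$: the translate by such a $g$ then has the same coefficient on $V_1$ and the same bispectrum as $h$, so equals $h$.

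\emph{Propagating the ambiguity.}
For each irreducible $W$ with $F(f)(W)$ invertible I would put $u(W):=F(f)(W)^{-1}F(h)(W)\in U(W)$, so $u(V_1)=u$ and, using $m_1(h)=m_1(f)$, $u(\mathbf 1)=1$. The heart of the matter is the claim that $u(\wedge^kV_1)=\wedge^ku$ for every exterior power of the defining representation; all of these are band-one or trivial, hence have invertible coefficients. I would prove this by induction on $k$: rewriting $a_2(h)(V_1\otimes\wedge^kV_1)=a_2(f)(V_1\otimes\wedge^kV_1)$ through \eqref{eq.biexplicit} and cancelling the invertible factor $F(f)(V_1)\otimes F(f)(\wedge^kV_1)$ as in the proof of Lemma~\ref{lemma.key} gives $F(h)(V_1\otimes\wedge^kV_1)=F(f)(V_1\otimes\wedge^kV_1)\,[u\otimes u(\wedge^kV_1)]$. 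Since $\wedge^{k+1}V_1$ occurs in $V_1\otimes\wedge^kV_1$ with invertible coefficient (Lemma~\ref{lem.wedge}), comparing the corresponding blocks --- and using that a block-triangular unitary is block-diagonal, so that $u\otimes u(\wedge^kV_1)$ preserves this decomposition --- shows $u(\wedge^{k+1}V_1)$ is the operator induced by $u\otimes u(\wedge^kV_1)=u\otimes\wedge^ku$ on the $\wedge^{k+1}V_1$ summand. But $u\in\GL(V_1)$ and $\wedge^{k+1}V_1\subseteq V_1\otimes\wedge^kV_1$ is a $\GL(V_1)$-subrepresentation on which $u\otimes\wedge^ku$ acts as $\wedge^{k+1}u$; hence $u(\wedge^{k+1}V_1)=\wedge^{k+1}u$, completing the induction.

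\emph{Conclusion.}
When $G=\SU(n)$, the top exterior power $\wedge^nV_1$ is the trivial representation, so the propagation gives $\det u=\wedge^nu=u(\wedge^nV_1)=u(\mathbf 1)=1$; thus $u\in\SU(n)$, and with $g=u^{-1}$ we have $D_{V_1}(g)^*=u$, so $h=g\cdot f$. When $G=\SO(2n+1)$, real-valuedness already puts $u$ in $O(V_1)$, and since all $\wedge^kV_1$ are band-one or trivial the propagation reaches $\wedge^{2n+1}V_1\cong\mathbf 1$ and forces $\det u=1$; hence $u\in\SO(2n+1)=D_{V_1}(\SO(2n+1))$, and again $h=g\cdot f$. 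The determinant computation for $\SO(2n+1)$ is precisely what removes the reflection ambiguity in Kakarala's theorem for $\SO(3)$, where only the $O(V_1)$-orbit was recovered.

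\emph{Main obstacle.}
I expect the delicate point to be the propagation step, and in particular the non-exterior-power summand $Y_k$ in the Pieri decomposition $V_1\otimes\wedge^kV_1=\wedge^{k+1}V_1\oplus Y_k$: its band may lie strictly between $\lceil b/2\rceil$ and $b$, so $u(Y_k)$ is not a priori defined, and one must argue --- via the $\GL(V_1)$-stability of the decomposition and the unitarity of $u\otimes\wedge^ku$ --- that it cannot interfere with matching the $\wedge^{k+1}V_1$ block, while keeping track of the fixed isomorphisms $C_{V,W}$ and of the cases where $\wedge^{k+1}V_1$ fails to be $G$-irreducible. It is presumably here that $\Sp(2n)$ and $\SO(2n)$ behave differently and yield only the weaker conclusion of Theorem~\ref{thm.informal}(i). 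A secondary point is to check that the fixed sequence of operations furnished by Corollary~\ref{cor.halfband} and Theorem~\ref{thm.classical} is genuinely independent of the function.
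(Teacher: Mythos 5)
Your proposal is correct and follows the same overall strategy as the paper: reduce all Fourier coefficients to $F(f)(V_1)$ via Theorem \ref{thm.classical} and Corollary \ref{cor.halfband}, pin $F(f)(V_1)$ down to a $U(V_1)$- (resp.\ $O(V_1)$-) coset via Propositions \ref{prop.3to2} and \ref{prop.secondmoment}, and then kill the residual unitary $u$ by chasing it through the exterior powers until the determinant is forced to equal $1$ --- for $\SU(n)$ via $\wedge^nV_1\cong\mathbf 1$, and for $\SO(2n+1)$ via the self-duality $\wedge^{n+1}V_1\cong\wedge^nV_1$, which is exactly the paper's mechanism as well. The one genuine difference is the execution of the propagation step, and it is worth flagging because it dissolves what you identify as your main obstacle: the paper does not propagate the full matrix $u$. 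It first factors $u=(\diag e^{i\theta})r$ with $r\in\SU(n)$ (resp.\ $o=\pm r$ with $r\in\SO(2n+1)$) and replaces $f$ by its translate under $r$, which is legitimate since the bispectrum is $G$-invariant; this reduces to the case where the unknown unitary is the scalar $e^{i\theta}$ (resp.\ $-1$). Propagating a scalar through $V_1\otimes\wedge^kV_1$ is a one-line computation, so the block-diagonality of $C_{V,W}^*[u\otimes u(\wedge^kV_1)]C_{V,W}$, the undefined $u(Y_k)$ on the non-exterior Pieri summand, and the $\GL(V_1)$-stability of the decomposition never come up. Your matrix-level version can be pushed through (multiplicity one of $\wedge^{k+1}V_1$ in $V_1\otimes\wedge^kV_1$ together with invertibility of $F(f)(\wedge^{k+1}V_1)$ forces the relevant row of the conjugated unitary into its diagonal block), but it is strictly more work; and for $\SO(2n+1)$ your phrase ``the propagation reaches $\wedge^{2n+1}V_1$'' is better stated as the consistency condition $u(V_n)=\wedge^nu=\det(u)\,\wedge^nu$ arising from $\wedge^{n+1}V_1\cong V_n$, which is precisely the paper's contradiction $F(f')(V_n)=-F(f')(V_n)$. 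One small correction to your closing speculation: the failure of the sharp statement for $\Sp(n)$ and $\SO(2n)$ is not caused by the Pieri summand $Y_k$; for $\Sp(n)$ the condition $\det u=1$ does not characterize the image of $\Sp(n)$ in $U(2n)$, and for $\SO(2n)$ the middle exterior power is not self-dual in the way needed, so the parity argument yields no contradiction.
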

\begin{proof}
  By Theorem \ref{thm.classical} we know that Fourier coefficients of all irreducibles
  are determined by the Fourier coefficient $F(f)(V)$ of the defining representation. Moreover, by
  Propositions \ref{prop.secondmoment} and \ref{prop.3to2}, we also know $F(f)(V) F(f)(V)^*$ so
  we know $F(f)(V)$ up to translation by an element of $U(V)$ if $f$ is complex valued and $O(V)$ if $f$ is
  real-valued.

  Suppose that $f' \in L^2(\SU(n))$ has the same bispectrum as $f$. Then
  we know that $F(f')(V) = uF(f)(V)$ for some $u \in U(V)=  U(n)$. Our goal
  is to show that $u \in \SU(n)$. Any element of $U(n)$ can be factored
  as $u = (\diag{e^{i\theta}} )r$ where $r \in \SU(n)$. Replace
  $f$ with the function whose Fourier coefficient of $V$ is $rF(V)$ and
  has the same bispectrum. (Such a function exists because the bispectrum is invariant under the action of $\SU(n)$ on $L^2(\SU(n))$.) By doing so, we may
  reduce to the case that $F(f')(V) = e^{i \theta}F(f)(V)$
  and has the same
  bispectrum. Our goal is to show that $\diag{e^{i \theta}} \in \SU(n)$ or, equivalently,
  that $e^{i \theta}$ is a $n$-th root of unity.

  Since the bispectra of $f$ and $f'$ are equal we see that 
  \begin{eqnarray*}
    a_2(f)(V \otimes V) & = & [F(f)(V) \otimes F(f)(V)]F(f)(V \otimes V)^* \\
    & = & [F(f')(V) \otimes F(f')(V)]F(f')(V \otimes V)^*\\
    & = & e^{2i \theta} [F(f)(V) \otimes F(f)(V)]F(f')(V \otimes V)^*
  \end{eqnarray*}
  where the last equality follows from the fact that
  $\diag{e^{i \theta}} \otimes \diag{e^{i \theta}}$ is $e^{2i\theta}$ times
    the identity operator on $V \otimes V$.

  If follows that for any irreducible $W$ appearing in $V \otimes V$,
  we have that $F(f')(W) = e^{2i \theta} F(f)(W)$. In particular
  if $V_2 = \wedge^2 V$ then $F(f')(V_2) = e^{2i\theta}F(f)(V_2)$.
  Continuing this way we see that for the fundamental representations
  $V=V_1, \ldots V_{n-1}$, $F(f')(V_k) = e^{i k \theta}F(f)(V_k)$.
    On the other hand we know that the bispectrum uniquely determines the
    Fourier coefficient of the trivial representation $V_0$, so we must have that $F(f')(V_0) = F(f)(V_0)$. However, $V_0$ appears as a summand
    in $V \otimes V_{n-1}$ so by our previous argument we see that
    $F(f')(V_0) = e^{i n \theta} F(f)(V_0)$. Therefore $e^{i n \theta} = 1$
    as desired.

    The proof for $\SO(2n+1)$ is similar to the proof for $\SU(n)$ but requires
    a slightly more complicated representation-theoretic argument.
    If $f \in L^2(\SO(2n+1))$ is real-valued and $f'$ is another real valued function with the same bispectrum then we know that $F(f')(V) = oF(f)(V)$ where
    $o \in O(2n+1)$. Now any element in $O(2n+1)$ can be written 
    as $\pm r$ where $r \in \SO(2n+1)$.
    We will prove the the result by showing that if $o \in O(2n+1) \setminus \SO(2n+1)$
    we obtain a contradiction. Assuming that
    $o \notin \SO(2n+1)$ we can reduce to the case that 
    that $F(f')(V) = -F(f)(V)$ and $f'$, $f$ have the same bispectrum.
    The tensor product $V \otimes V$ contains the fundamental representation
    $V_2 = \wedge^2V$ as a summand, so we see that $F(f')(V_2) = (-1)^2F(f)(V)$.
    Continuing this way we see that $F(f')(V_k) = (-1)^k F(f)(V)$ for
    $1 \leq k \leq n$.
    Since $V_n = \wedge^n V$ we know by Lemma \ref{lem.wedge}
    that $V \otimes V_n$ contains a copy of $\wedge^{n+1}V$. Since
    $\dim V = 2n+1$, the exterior products $\wedge^{n}V$ and $\wedge^{n+1}V$
    are dual representations. However, we also know that as $\SO(2n+1)$
    representations, $\wedge^k V$ is self-dual for $k \leq n$. Hence
    $V \otimes V_n$ contains a copy of $V_n$.
This implies
    that $F(f')(V_n) = -F(f')(V_n)$, which is a contradiction.
\end{proof}
\section{Examples and Applications}
\subsection{Counterexamples}
  We give two examples, one for $S^1= \SO(2)$ and one for $\SU(2)$, that illustrate 
  the necessity of the hypothesis in Theorem~\ref{thm.suso} that the Fourier coefficients of band at most
  $\lceil \frac{b}{2} \rceil$ be invertible.
The reason we restrict to these groups is
that they are both rank one which makes the calculations more tractable.

\subsubsection{$S^1$ counterexample}
If $k \in \Z$ is any integer, denote by $V_k$ the one-dimensional
representation of $S^1$ where $e^{\iota \theta}$ acts on $V_k$ by
scalar multiplication by $e^{\iota k \theta}$.
With this notation, for any $\ell >0$ there are two one-dimensional representations of $S^1$ of band $\ell$, namely
$V_\ell$ and $V_{-\ell}$.

Consider the band-limited representation $W_3 = V_0 \oplus V_1 \oplus V_2 \oplus V_3$. Since every irreducible representation
of $S^1$
is one-dimensional the Fourier coefficients are scalars, so a Fourier
coefficient is invertible if and only if it is non-zero.
If $f \in W_3 \subset L^2(S^1)$ is a function, then the Fourier coefficient of $V_\ell$ is the
scalar $a_\ell$ in the Fourier expansion $f = \sum_{\ell=0}^3 a_\ell e^{i\ell \theta}$. As noted
in Example~\ref{ex.fourierband} if the Fourier coefficient of $a_2$ is zero
then we cannot recover the Fourier coefficient $a_3$ from
the bispectrum.

\subsubsection{$\SU(2)$ counterexample}
In this example we identify the defining representation
$V_1$ of $\SU(2)$ with the two-dimensional vector space of binary linear forms. The single irreducible representation $V_\ell$ of
band $\ell$ is $\Sym^\ell V_1$ and can be identified with the 
$(\ell +1)$-dimensional vector space of homogeneous binary forms of degree $\ell$.

Consider the 3-band limited representation of $\SU(2)$,
$W_3 = V_0 \oplus V_1 \oplus V_2 \oplus V_3$. Unlike the case for
$S^1$, the dimension of $V_\ell$ depends on $\ell$ as it has dimension
$\ell +1$. As a result the Fourier coefficient of
$V_\ell$ is not a scalar but an $(\ell+1) \times (\ell +1)$ matrix.
Let $f \in W_3 \subset L^2(\SU(2))$ be the function whose
Fourier coefficients are $F(f)(V_0) = 1, F(f)(V_1) = \Id_2,F(f)(V_2) = 0$
and $F(f)(V_3) =\Id_4$
where $\Id_k$ indicates the $k \times k$ identity matrix.
Using formula~\eqref{eq.fourierinv} for the inverse Fourier
transform we can explicitly compute $f$ as the function
$$A \mapsto 1 + 2\Tr A^{-1} + 4\Tr \sym^3A^{-1}.$$ If we write
$A = \begin{pmatrix}\alpha & \beta\\ -\overline{\beta} & \overline{\alpha}
  \end{pmatrix}$ 
  with $|\alpha|^2 + |\beta|^2 =1$
  then we have the explicit formula
$$A \mapsto 1+ 2(\alpha + \overline{\alpha}) +  4(\alpha + \overline{\alpha})(\alpha^2 + \overline{\alpha}^2 - 2|\beta|^2).$$  

 Since $f$ has only three non-zero Fourier coefficients $F(f)(V_0),
 F(f)(V_1), F(f)(V_3)$ the bispectrum has at most nine non-zero Fourier
 coefficients $a_2(f)(V_i \otimes V_j)$ for $i,j \in \{0,1,3\}$.
 The bispectrum is also symmetric; i.e., 
 $a_2(f)(V_i \otimes V_j)= a_2(f)(V_j \otimes V_i)$
 so we need only compute the six Fourier coefficients $a_2(f)(V_i \otimes V_j)$ with $i \leq j$.
 
 Because we have chosen the non-zero Fourier coefficients of $f$ to be the identity matrices, the three Fourier coefficients of
 $a_2(f)(V_0 \otimes V_j)$ are readily calculated using formula~\eqref{eq.bispectrum} and are:
 \begin{align*} a_2(f)(V_0 \otimes V_0) &= 1\\
 a_2(f)(V_0 \otimes V_1) & = \Id_2 \Id_2^* = \Id_2\\
 a_2(f)(V_0 \otimes V_3) & = \Id_4 \Id_4^* = \Id_4
 \end{align*}

 We also claim that $a_2(V_1 \otimes V_3)$ is the $8 \times 8$ zero matrix.
 The reason is that the tensor product $V_1 \otimes V_3$ decomposes
 as $V_4 \oplus V_2$ and the Fourier coefficients $F(f)(V_4)$ and
 $F(f)(V_2)$ are zero. Thus, $F(f)(V_2 \otimes V_3) = 0$. Hence
$$a_2(V_1 \otimes V_3) = [F(f)(V_1) \otimes F(f)(V_3)][F(f)(V_1 \otimes V_3)] = 0.$$

We now compute the Fourier coefficients $a_2(f)(V_1 \otimes V_1)$
and $a_2(f)(V_1 \otimes V_3)$.

The representation $V_1 \otimes V_1$ is the vector space of 
forms $q(x_0,x_1,y_0,y_1)$ which are homogeneous of degree one in
$(x_0,x_1)$ and $(y_0, y_1)$ respectively. This representation is isomorphic to the sum of the two irreducibles $V_2 \oplus V_0$. 
The
summand $V_0$ the one-dimensional subspace generated by
the $\SU(2)$-invariant form $x_0y_1 + x_1 y_0$. 
Since the Fourier coefficient $F(f)(V_2) =0$, the Fourier
coefficient $F(f)(V_1 \otimes V_1)$ is rank-one projection $P_1$
which projects $V_1 \otimes V_1$ to the one-dimensional subspace
spanned by the binomial $x_0y_1 + x_1y_0$. Note that $P_1 = P_1^*$
because the form $x_0y_1 + x_1 y_0$ is symmetric in the $x$ and $y$ variables.
Applying formula~\eqref{eq.bispectrum}
we see that 
\begin{eqnarray*}
a_2(f)(V_1 \otimes V_1) & = & [F(f)(V_1) \otimes F(f)(V_1)][F(f)(V_1 \otimes V_1)]^* \\
& = & [\Id_2 \otimes \Id_2]P_1^*\\
& = & P_1
\end{eqnarray*}
 
The remaining Fourier coefficient
is $$a_3(f)(V_3 \otimes V_3) = [\Id_4 \otimes \Id_4][F(f)(V_3 \otimes V_3)]^*.$$
The representation $V_3 \otimes V_3$ is isomorphic to the sum
$V_6 \oplus V_4 \oplus V_2 \oplus V_0$. The only non-zero Fourier
coefficient in this sum is that of the trivial representation $V_0$. Viewing
$V_3 \otimes V_3$ as the vector space of forms $s(x_0,x_1,y_0,y_1)$ which
are homogeneous of degree three in $(x_0,x_1)$ and $(y_0,y_1)$ respectively,
the invariant subspace $V_0$ is spanned by the form $(x_0y_1 - x_1y_0)^3$, and the Fourier coefficient $F(f)(v_2)(V_3 \otimes V_3)$
is the rank-one projection $P_3$ on to this subspace. Because
the form $(x_0y_1 - x_1y_0)^3$ is skew-symmetric in the $x$ and $y$
variable, $P_3^* = - P_3$.
Thus, by formula~\eqref{eq.bispectrum}
we see that $a_2(f)(V_3 \otimes V_3) = P_3^*=-P_3$

It is easy to construct functions $f'$ with the same bispectrum as
$f$ which are not in the same $\SU(2)$ orbit. The simplest example
is the functions whose Fourier coefficients are
$F(f')(V_0) = 1, F(f')(V_1) = \Id_2, F(f')(V_2) = 0, F(f')(V_3) = -\Id_3$
corresponding to the function on $\SU(2)$
defined by the formula
$$A \mapsto 1+ 2(\alpha + \overline{\alpha}) -  4(\alpha + \overline{\alpha})(\alpha^2 + \overline{\alpha}^2 - 2|\beta|^2).$$
More generally if $U$ is any $4 \times 4$ unitary matrix acting on
$V_3$ such that $U \otimes U \circ P_3 = P_3$ where $P_3$ is the projection
onto the subspace spanned by $(x_0y_1 -x_1y_0)^3$ then
the function $f'$ with generalized Fourier coefficients
$F(f')(V_0) = 1, F(f')(V_1) = \Id_2, F(f')(V_2) =0, F(f')(V_3) = U$
will have the same bispectrum as $f$.

\subsection{Algorithmic aspects} \label{sec.algorithms}
Although our result is theoretical, the proof of Theorem~\ref{thm.classical}
gives a potential algorithm for determining the Fourier coefficients of
all irreducible representations of a band-limited function from the Fourier
coefficient of the defining representation and the bispectrum. Moreover,
if all non-zero generalized Fourier coefficients of a band-limited
function are invertible, then the potential algorithm needs only a relatively
small part of the bispectrum to compute the unknown function
from the matrix Fourier coefficient of the defining representation.

Precisely, if $f \in L^2(G)$ is band-limited with band $b$
and $W_1, \ldots , W_t$ are the irreducible representations of band-one then
Proposition~\ref{prop.bandone} implies that we only need as 
as input, the values (which are matrices) of the bispectrum at
$\{W_i \otimes V\}$ where $V$ runs through all irreducible representations
of band strictly less than $b$. By contrast, if $f$ is $b$ band-limited
then the full bispectrum is determined by its values at $W \otimes V$ where
$W,V$ run over all irreducible representations of band at most $b$. As in the proof of Theorem~\ref{thm.classical} we proceed by determining the Fourier coefficients of band $\ell$ from the Fourier coefficients of band $\ell-1$ by performing at most $B_{\ell-1}$ matrix inversions and multiplications, where $B_{\ell-1}$ is the number of irreducible representations of band $\ell-1$. (Note that this strategy would also require knowing the decomposition of the tensor products
$W_i \otimes V$ into irreducible representations.) An interesting question for further work is to investigate
the robustness and stability of this approach from an algorithmic perspective.

For example, if $G = \SU(3)$ the number of irreducible representations
of band $\ell$ is $\ell +1$, so the bispectrum of a $b$ band-limited
function depends on its values at $\binom{b+1}{2}^2$ pairs of irreducible representations. However, in order to determine the Fourier
coefficients from the Fourier coefficient of the defining representation
we need only consider $2 \binom{b+1}{2}$ values of the bispectrum.
\subsection{Results for the group $\SO(3)$} \label{sec.so3}
For the classical Lie groups, the number and dimensions of the irreducible
representations of a given band grows exponentially in the rank of
of the group. However, for a group of rank-one such as $\SU(2)$ or $\SO(3)$
the computations may be feasible. For $\SO(3)$ there is a single
irreducible representation of band $\ell$, the representation $V_\ell$
of dimension $2\ell +1$ which has a basis of spherical harmonic polynomials
of $Y_{\ell,m}$ for $m = -\ell , \ldots , \ell$. In this case,
the coefficient of $V_1 \otimes V_{b-1}$ in the bispectrum is a $(6b-3) \times (6b-3)$ matrix. In particular, this shows that we can determine the Fourier coefficients of a $b$ band-limited function from the Fourier coefficient
$F(f)(V_1)$ and a polynomial in $b$ number of matrix multiplications and inversions. This suggests that for a group of rank one like $\SO(3)$ bispectrum
inversion of a band-limited function can be done in polynomial time in
the band.
\subsubsection{Comparison with the work of Liu and Moitra \cite{liu2021algorithms}} \label{sec.moitra}
A related result, with precise error bounds was proved by Liu and Moitra
\cite{liu2021algorithms}
for the MRA problem of $\SO(3)$ acting on band-limited functions on $S^2$.
Note that the space of $b$ band-limited functions on $S^2$ is isomorphic to the
sum of the representations $V_0 \oplus V_1 \ldots \oplus V_b$ where
$V_\ell$ is the $2\ell +1$-dimensional irreducible representation
of $\SO(3)$. This is in the contrast to the case for functions on $\SO(3)$
where the summand $V_\ell$ appears with multiplicity equal to its dimension
$2\ell +1$. In particular, their algorithm determines $1 + 3 + \ldots + (2b+1) = (b+1)^2$ unknown coefficients $f_{\ell m}$ coming from
the expansion of a band-limited function in spherical harmonics
as
$$\sum_{\ell \leq b} \sum_{m = -\ell}^{\ell} f_{\ell m} Y_{\ell m}(\theta, \phi).$$
By comparison, our goal is to determine a collection 
of matrices (the matrix Fourier coefficients). The main
result of \cite{liu2021algorithms} is a robust quasi-polynomial time algorithm which uses
the degree three
invariants together with knowledge of all coefficients
$f_{\ell m}$ with $\ell \leq C$ to determine the remaining coefficients
by frequency marching. (Here $C$ is a fixed constant independent of the band limit.)
It should be noted that it is an open theoretical problem as to whether
all unknown coefficients $f_{\ell m}$ can be determined solely from the degree-three invariants \cite[Section 4.5]{bandeira2017estimation}.

\subsection{Unprojected cryo-EM} \label{sec.cryo}
Let $L^2(\R^3)$ be Hilbert space of complex valued $L^2$ functions
on~$\R^3$. The action of $\SO(3)$ on $\R^3$ induces a corresponding
action on $L^2(\R^3)$, which we view as an
infinite-dimensional representation of $\SO(3)$.
In cryo-EM we are interested in the action of $\SO(3)$ on
the subspace of $L^2(\R^3)$ corresponding to the Fourier 
transforms of real valued functions on $\R^3$,  representing  the Coulomb  potential of an unknown molecular structure. 

Using spherical coordinates $(\rho, \theta, \phi)$, we consider a finite dimensional
approximation of 
$L^2(\R^3)$ by discretizing $f(\rho, \theta, \phi)$
with $R$ samples $r_1, \ldots , r_{R}$, of the radial coordinates
and band limiting the corresponding spherical functions $f(r_i, \theta, \phi)$. 
This is a standard assumption in the cryo-EM literature, see for example~\cite{bandeira2020non}.
Mathematically, this means that we approximate the infinite-dimensional representation $L^2(\R^3)$ with
the finite dimensional representation
$V = (\oplus_{\ell =0}^L V_\ell)^R$, where
$L$ is the band limit, and $V_{\ell}$ is the $(2\ell +1)$-dimensional
irreducible representation of $\SO(3)$, corresponding to harmonic
polynomials of frequency $\ell$.
An orthonormal basis for $V_\ell$
is the set of spherical harmonic polynomials $\{Y_\ell^m(\theta,
\phi)\}_{m = -\ell}^\ell$. We use the notation $Y_\ell^m[r]$
to consider the corresponding spherical harmonic as a basis vector for
functions on the $r$-th spherical shell. The dimension of this
representation is $R(L^2 + 2L+1)$.

Viewing an element of $V$ as a radially discretized function on $\R^3$, 
we can view
$f \in V$ as an $R$-tuple
$$f = (f[1], \ldots , f[R]),$$
where
$f[r] \in L^2(S^2)$ is an $L$-bandlimited function.
Each $f[r]$ can be expanded in terms of the basis functions $Y_\ell^m(\theta, \varphi)$ as follows 
\begin{equation} \label{eq.function}
  f[r] = \sum_{ \ell=0}^L\sum_{m=-\ell}^\ell A_{\ell}^m[r]
  Y_{\ell}^m.
  \end{equation}
Therefore, the problem of determining a structure reduces to determining the unknown coefficients $A_\ell^m[r]$ in \eqref{eq.function}.

Note that when
$f$ is the Fourier transform of a real valued function,  the coefficients
  $A_\ell^m[r]$ are real for even $\ell$ and purely imaginary for odd $\ell$~\cite{bhamre2015orthogonal}. 

For the case of $\SO(3)$ we can combine our results and obtain the following
corollary.
\begin{corollary} \label{cor.cryo}
  If $L \geq 3$ and any $R \geq L+2$ the generic orbit
  in the $\SO(3)$ representation $V = (\oplus_{\ell =0}^L V_\ell)^R$
  can be recovered from the third moment.
\end{corollary}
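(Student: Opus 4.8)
The plan is to reduce the statement about the finite-dimensional representation $V = (\oplus_{\ell=0}^L V_\ell)^R$ to the sharp result for $\SO(3) = \SO(2\cdot 1 + 1)$ already available to us. Observe that $\SO(3) = \SO(2n+1)$ with $n=1$, whose defining representation is $V_1$ (the $3$-dimensional representation), and whose only band-one representation is $V_1$ itself. A generic $f \in (\oplus_{\ell=0}^L V_\ell)^R$ corresponds, via the identification of $L^2(S^2)$ on each of the $R$ shells with $\oplus_{\ell \geq 0} V_\ell$, to a band-$L$ function in $L^2(\SO(3))^R$; more precisely, the regular representation $L^2(\SO(3))$ contains each irreducible $V_\ell$ with multiplicity $2\ell+1$, so the representation $V$ embeds into a suitable number of copies of $L^2(\SO(3))$ with the same irreducible components. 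The first step is therefore to invoke the bootstrap Proposition \ref{prop.bootstrap}: it suffices to prove orbit recovery for the \emph{smallest} representation with the given irreducible components, namely $W = \oplus_{\ell=0}^L V_\ell \subseteq L^2(\SO(3))$ viewed as a band-$L$ function space, provided we check the stabilizer hypothesis of that proposition.

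Next I would verify the hypotheses of Theorem \ref{thm.suso}(ii) for this $W$. We need that the Fourier coefficients $F(f)(V_\ell)$ of all irreducibles of band at most $\lceil L/2 \rceil$ are invertible for a generic $f$. Here is where the numerical conditions $L \geq 3$ and $R \geq L+2$ enter: the coefficient matrix $A_\ell(f)$ of $V_\ell$ inside $V = (\oplus V_\ell)^R$ has size $(2\ell+1) \times R$, and it is full rank — hence the second moment, and a fortiori the relevant Fourier coefficient, is nonsingular — precisely when $R \geq 2\ell+1$. Since $\ell \leq \lceil L/2 \rceil$ in the hypothesis and $2\lceil L/2\rceil + 1 \leq L+2$, the condition $R \geq L+2$ guarantees all the required coefficients are invertible for generic $f$. (The band-one case $b=1$ is degenerate and excluded by $L\geq 3$; the genericity is a Zariski-open non-vanishing of finitely many determinants.) With these hypotheses in hand, Theorem \ref{thm.suso}(ii) gives that the $\SO(3)$-orbit of a generic real-valued band-$L$ function in $L^2(\SO(3))$ is determined by its bispectrum, equivalently its third moment.

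Finally I would stitch the pieces together: the bootstrap Proposition \ref{prop.bootstrap} upgrades the conclusion from $W$ to $V = W^R$ (the multiplicities are increased but the irreducible components are unchanged), so the third moment determines the $\SO(3)$-orbit of a generic $f \in V$. One must also address the real-valued constraint: the coefficients $A_\ell^m[r]$ are real for even $\ell$ and imaginary for odd $\ell$ (as noted after \eqref{eq.function}), which is exactly the condition that $f$ is a real-valued function in $L^2(\SO(3))$ in the appropriate sense, so Theorem \ref{thm.suso}(ii) applies on the nose. The main obstacle I anticipate is bookkeeping rather than conceptual: one has to be careful that the genericity conditions (nonsingularity of the coefficient matrices $A_\ell(f)$ for $\ell \leq \lceil L/2\rceil$, and trivial generic stabilizer in the punctured representations appearing in Proposition \ref{prop.bootstrap}) are genuinely compatible with $R \geq L+2$ and hold on a common dense open set — in particular checking that a generic vector in $(\oplus_{\ell \neq k} V_\ell)^R$ still has trivial $\SO(3)$-stabilizer, which should follow from the fact that $R \geq 2$ and at least one nontrivial $V_\ell$ with $\ell \geq 1$ survives the omission, so that $\SO(3)$ acts with trivial generic stabilizer already on two generic vectors in that $V_\ell$.
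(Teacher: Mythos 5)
Your overall strategy (bootstrap down to a sub-representation, embed it in $L^2(\SO(3))$, apply Theorem \ref{thm.suso}(ii), bootstrap back up) is the paper's strategy, but the specific reduction you make breaks the argument. You reduce to the multiplicity-one representation $W=\oplus_{\ell=0}^L V_\ell$. Inside $L^2(\SO(3))$ the irreducible $V_\ell$ occurs with multiplicity $2\ell+1$, so the Fourier coefficient $F(f)(V_\ell)$ is a $(2\ell+1)\times(2\ell+1)$ matrix whose rank is bounded by the multiplicity of $V_\ell$ in the submodule containing $f$. For $f$ in your $W$ that rank is at most $1$, so $F(f)(V_\ell)$ is singular for every $\ell\geq 1$, and the hypotheses of Corollary \ref{cor.halfband} and Theorem \ref{thm.suso}(ii) fail identically on $W$ --- not just non-generically. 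Your attempted fix, namely that $A_\ell(f)$ is a full-rank $(2\ell+1)\times R$ matrix because $R\geq L+2$, refers to the original representation $V$, not to the reduced one; this is circular, since Proposition \ref{prop.bootstrap} only transfers the \emph{conclusion} from the small representation to the large one, so all hypotheses must be verified on the representation you actually prove the result for. Had your reduction been legitimate it would prove the corollary for every $R\geq 1$, which is exactly what the paper cannot do (and why it remarks that its bound is weaker than the conjectured $R\geq 3$ of \cite{bandeira2017estimation}).

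The paper's proof instead reduces to the submodule
$W = V_0 \oplus V_1^3 \oplus \cdots \oplus V_{\lceil L/2\rceil}^{2\lceil L/2\rceil+1}\oplus V_{\lceil L/2\rceil+1}^{L+2}\oplus\cdots\oplus V_L^{L+2}$,
in which each $V_\ell$ with $\ell\leq \lceil L/2\rceil$ appears with its full multiplicity $2\ell+1$; a generic element of this $W$, viewed inside $L^2(\SO(3))$, therefore has invertible Fourier coefficients up to band $\lceil L/2\rceil$, which is precisely what Corollary \ref{cor.halfband} needs. The condition $R\geq L+2$ is what guarantees $W\subseteq V$ so that Proposition \ref{prop.bootstrap} applies. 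Your instinct about where $R\geq 2\ell+1$ enters is the right one; you just have to keep those multiplicities in the reduced representation rather than discarding them. The remaining steps of the paper (third moment determines the second, hence $A_1(f)A_1(f)^T$ and the band-one coefficient up to $O(3)$, then Theorem \ref{thm.suso} to cut $O(3)$ down to $\SO(3)$) match the outline you give.
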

\begin{proof}
  By Proposition \ref{prop.bootstrap} it suffices to prove
  the corollary when $R = L+1$.
  Consider the $\SO(3)$-module
  $$W = V_0 \oplus V_1^3 \oplus \ldots V_{\lceil L/2 \rceil}^{2\lceil L/2 \rceil +1}
  \oplus V_{\lceil L/2 \rceil + 1}^{L+2} \oplus \ldots V_{L}^{L+2}.$$
  Since $L+2 \geq \lceil L/2 \rceil + 1$,
  we can once more invoke
  Proposition \ref{prop.bootstrap} and prove the result for the representation
  $W$.
We view $W$ as an $\SO(3)$-submodule of the vector space of
    $L$-bandlimited functions in $L^2(\SO(3))$ since the latter representation
    is isomorphic to $\oplus_{\ell =0}^L V_\ell^{2L+1}$. The generic element
    of $W$ viewed as a submodule of $\oplus_{\ell =0}^L V_\ell^{2\ell+1}$
    has invertible Fourier coefficients up to band $\lceil L/2 \rceil$.
    Therefore, by Corollary \ref{cor.halfband} if $f \in W$ then the
    Fourier coefficients
    $F(f)(V_\ell)$ are determined by the bispectrum
    and the single Fourier coefficient $F(f)(V_1)$ of band one.
     By Proposition \ref{prop.3to2}
    the third moment determines the second moment and by Proposition \ref{prop.secondmoment} the second moment determines $A_1(f)A_1(f)^T$ where
    $A_1(f)$ is the full-rank real $3\times 3$ matrix $(-iA_1^m[r])_{-1 \leq m \leq 1, 1 \leq r \leq 3}$. (Note that the coefficients $A_1^m[r]$ are purely imaginary so
    we multiply by $-i$ to obtain a real valued matrix.)
    In particular the matrix $A_1(f)$ (which we can identify with the Fourier coefficient in a suitable basis) is determined up to multiplication
    by an element $O(3)$. By Theorem \ref{thm.suso} $A_1(f)$ is determined
    up rotation by an element of $\SO(3)$. Hence the orbit of the generic
    vector $f \in W$ is determined from its third moment. 
\end{proof}
\begin{remark} An an analogous result holds for the groups $\SO(2n+1)$ and $\SU(n)$. However, because there are many representations of a given band and their dimensions vary, it cannot be stated as precisely as the corresponding statement for $\SO(3)$. The general statement is the following: Let  $R$ be at least $\max \dim V$ where $V$ runs over all irreducible representations of band $\lceil b/2 \rceil$
and let $W = \oplus_{\{V| \band V \leq b\}}V^R$. Then the generic $\SO(2n+1)$ (resp. $\SU(n)$) orbit in $W$ can be recovered from the third moment.
\end{remark}
\subsubsection{Further directions} In
  \cite{bandeira2017estimation} it is conjectured that for any $L$ the
  third moment separates generic orbits provided $R \geq 3$. This
  conjecture was verified for $1 \leq L \leq 15$ using techniques from
  computational commutative algebra and a computational algebra
  algorithm was presented for recovering the orbit using frequency
  marching.  
  Note that Corollary \ref{cor.cryo} is
  weaker than the conjectured bounds of \cite{bandeira2017estimation}
  in that we require the multiplicities of the irreducible
  representations to grow with the band. An interesting direction for
  further work is to refine the the methods used here to determine
  whether the third moment carries enough information to separate
  orbits when the irreducible representations have constant
  multiplicity which is independent of the band limit.
  
There is an expectation in the cryo-EM community that the generic orbit can be recovered from the projected third moment. In \cite[Section 4.5]{bandeira2017estimation} the authors have computationally verified that the projected third moment recovers generic orbits up to a finite list (list recovery). An important problem is to mathematically prove that the projected third moment
separates generic orbits in the spherical shells model. We view Corollary~\ref{cor.cryo} as a first step in this direction - particularly since we can recover generic orbits from a very small portion of the information carried by the bispectrum/third moment.

Another interesting avenue of investigation is the case of finite groups. It is known
\cite{bandeira2017estimation} that the third moment separates generic orbits in any representation containing the regular representation. However, there are essentially no known non-trivial examples of smaller representations of finite groups where the third moment separates generic orbits.
\section*{Acknowledgment}
We thank Tamir Bendory for helpful discussions.
The work of D.E. was supported by the BSF grant no. 2020159 and NSF-DMS 2205626. The work of M.S. 
was partially supported by a Discovery Grant from the National Science and Engineering Research Council of Canada as well as a Mathematics Faculty Research Chair from the University of Waterloo.

\bibliographystyle{siamplain}
%\bibliography{../ref}

\appendix

\section{Representation theory}
\label{sec:representation_theory}

\subsection{Terminology on representations}
Let $G$ be a group. A  representation of $G$ is a homomorphism, 
$G \stackrel{\pi} \to \GL(V)$, where $V$ is a vector space over a field
and $\GL(V)$ is the group
of invertible linear transformations $V \to V$. Given a representation
of a group $G$, we can define an action of $G$ on $V$ by $g \cdot v = \pi(g)v$.
Since $\pi(g)$ is a linear transformation, the action of $G$ is necessarily linear, meaning that for any vectors $v_1, v_2$ and scalars $\lambda, \mu \in \C$
$g \cdot ( \lambda v_1 + \mu v_2) = \lambda(g \cdot v_1) + \mu (g \cdot v_2)$.
Conversely, given a linear action of $G$ on a vector space $V$, we obtain
a homomorphism $G \to \GL(V)$, $g \mapsto T_g$, where
$T_g \colon V \to V$ is the linear transformation $T_g(v) = (g \cdot v)$.
Thus, giving a representation of $G$ is equivalent to giving a linear action
of  $G$ on a vector space $V$. Given this equivalence, we will follow
standard terminology and refer to a vector space $V$ with a linear action
of $G$ as a {\em representation of $G$}.

A representation $V$ of $G$ is {\em finite dimensional} if $\dim V < \infty$.
In this case, a choice of basis for $V$ identifies $\GL(V) = \GL(N)$,
where $N = \dim V$. If $V$ is a complex vector space with
Hermitian inner product $\langle\cdot,\cdot \rangle$
on $V$,  we say that a representation is {\em unitary} if for any two vectors
$v_1, v_2 \in V$
$\langle g \cdot v_1,   g \cdot v_2 \rangle = \langle v_1,   v_2 \rangle$.
Likewise if $V$ is a real vector space with inner product $\langle \cdot, \cdot \rangle$ then we say that the representation is orthogonal if
the action of $G$ preserves the inner product.
If we choose an orthonormal basis for $V$, then the representation
of $G$ is unitary (resp. orthogonal) if and only if the image of $G$ under the homomorphism
$G \to \GL(N)$ lies in the subgroup $U(N)$ (resp. $O(N)$) of unitary (resp.
orthogonal) matrices.

A representation $V$ of a group $G$ is {\em irreducible} if $V$ contains
no non-zero proper $G$-invariant subspaces.

\subsection{Representations of compact groups}
Any compact group $G$ has a $G$-invariant measure called a Haar measure.
The Haar measure $dg$ is typically normalized so that $\int_G dg =1$.
If $V$ is a finite-dimensional representation of a compact group
and $(\cdot,\cdot)$ is any Hermitian inner product, then
the inner product $\langle \cdot,\cdot\rangle$ defined by the formula
$\langle v_1, v_2 \rangle = \int_G (g\cdot v_1, g \cdot v_2)\;dg$ is $G$-invariant. As a consequence we obtain the following fact.
\begin{proposition}
  Every finite dimensional representation of a compact group is unitary.
\end{proposition}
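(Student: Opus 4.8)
The plan is to average an arbitrary inner product over the group, which is exactly the construction recorded in the paragraph immediately preceding the statement; the task is just to verify that this construction does what we want. First I would observe that since $V$ is a finite-dimensional complex vector space it carries at least one Hermitian inner product $(\cdot,\cdot)$ — for instance, fix a basis and pull back the standard form on $\C^N$, $N = \dim V$. This form need not be $G$-invariant, so the next step is to symmetrize it: set $\langle v_1, v_2 \rangle := \int_G (g\cdot v_1,\, g\cdot v_2)\, dg$, using the normalized Haar measure $dg$.

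The only step with any real content is checking that $\langle\cdot,\cdot\rangle$ is a genuine Hermitian inner product. Sesquilinearity and conjugate symmetry are inherited from $(\cdot,\cdot)$ by linearity of the integral. For positive-definiteness, note that for $v \neq 0$ the integrand $g \mapsto (g\cdot v,\, g\cdot v)$ is continuous and nonnegative (continuity uses that the action map $G \times V \to V$ is continuous, part of the definition of a representation of a topological group), and it is strictly positive at the identity $e$ since $e\cdot v = v \neq 0$; by continuity it is strictly positive on an open neighborhood of $e$, which has positive Haar measure, so $\langle v, v\rangle > 0$. Here compactness of $G$ guarantees the integral is finite.

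Finally, the $G$-invariance $\langle h\cdot v_1,\, h\cdot v_2 \rangle = \langle v_1, v_2 \rangle$ for all $h \in G$ is the identity already asserted above, following from translation-invariance of Haar measure under $g \mapsto gh$. Choosing an orthonormal basis of $V$ for $\langle\cdot,\cdot\rangle$ then identifies $\GL(V) = \GL(N)$ in such a way that the image of $G$ lies in $U(N)$; equivalently, $V$ equipped with $\langle\cdot,\cdot\rangle$ is a unitary representation. Since the argument used only that $V$ is finite-dimensional and $G$ is compact, this proves the proposition. The main (mild) obstacle is the positivity-and-well-definedness check in the middle paragraph; everything else is bookkeeping.
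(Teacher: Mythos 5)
Your proposal is correct and follows exactly the route the paper takes: the paper's ``proof'' is the paragraph immediately preceding the statement, which performs the same Weyl averaging of an arbitrary Hermitian inner product against the normalized Haar measure. Your added verification of positive-definiteness (which could be shortened by noting that $(g\cdot v, g\cdot v)>0$ for \emph{every} $g$, since each $g$ acts invertibly) is a detail the paper leaves implicit, but the argument is the same.
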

Using the invariant inner product we can then obtain the following decomposition
theorem for finite dimensional representations of compact group.
\begin{proposition}
  Any finite dimensional representation of a compact group
  decomposes into a direct sum of irreducible representations.
\end{proposition}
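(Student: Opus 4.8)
The plan is to combine the preceding proposition---that every finite dimensional representation of a compact group $G$ carries a $G$-invariant Hermitian inner product---with an induction on $\dim V$. First I would dispose of the base case: if $V$ is irreducible (in particular if $\dim V \le 1$), there is nothing to prove, since $V$ is then already a one-term direct sum of irreducibles.

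For the inductive step, suppose $V$ is not irreducible, so it contains a nonzero proper $G$-invariant subspace $W$. Fix a $G$-invariant Hermitian inner product $\langle\cdot,\cdot\rangle$ on $V$, which exists by the previous proposition, and form the orthogonal complement $W^\perp = \{v \in V : \langle v, w\rangle = 0 \text{ for all } w \in W\}$. The central claim is that $W^\perp$ is again $G$-invariant: for $v \in W^\perp$, $g \in G$, and $w \in W$, the $G$-invariance of the inner product gives $\langle g\cdot v, w\rangle = \langle v, g^{-1}\cdot w\rangle$, and $g^{-1}\cdot w \in W$ because $W$ is $G$-invariant, so this inner product vanishes; hence $g\cdot v \in W^\perp$. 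Since $\langle\cdot,\cdot\rangle$ is positive definite, $V = W \oplus W^\perp$ as representations of $G$, and both summands have strictly smaller dimension than $V$.

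Applying the inductive hypothesis to $W$ and to $W^\perp$ expresses each as a direct sum of irreducible subrepresentations, and concatenating these two decompositions exhibits $V$ itself as a direct sum of irreducibles, completing the induction.

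The step I expect to require the most care is the $G$-invariance of $W^\perp$, since this is precisely where compactness enters---through the averaged, $G$-invariant inner product supplied by the Haar measure. For a general group with no invariant inner product the orthogonal complement of a subrepresentation need not be a subrepresentation, and the statement can fail; everything else in the argument is routine linear algebra.
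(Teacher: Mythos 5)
Your argument is correct and follows exactly the route the paper indicates: the paper states this proposition without proof, remarking only that it follows ``using the invariant inner product,'' and your induction via the $G$-invariance of $W^\perp$ with respect to the averaged Hermitian form is precisely that standard complete-reducibility argument. Your identification of the invariance of $W^\perp$ as the step where compactness (via Haar averaging) enters is also the right emphasis.
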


If $V$ is a representation, then $V^G = \{v \in V| g \cdot v = v\}$
is a subspace which is called the subspace of invariants.

\subsection{Schur's Lemma}
A key property of irreducible unitary representations is Schur's Lemma.
Recall that  a linear transformation $\Phi $ is $G$-invariant if $g \cdot \Phi v = \Phi g \cdot v$.

\begin{lemma} \label{lem.schur}
  Let $\Phi \colon V_1 \to  V_2$ be a $G$-invariant linear transformation
  of finite dimensional irreducible representations
  of a group $G$ (not necessarily compact).
  Then, $\Phi$ is either zero or an isomorphism.
  Moreover, if $V$ is a finite dimensional irreducible unitary representation
  of a group $G$ then any $G$-invariant linear transformation
  $\phi \colon V \to V$ is multiplication by a scalar.
\end{lemma}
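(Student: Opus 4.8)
The plan is to use the elementary but crucial observation that for a $G$-equivariant linear map both the kernel and the image are $G$-invariant subspaces, and then to play this off against irreducibility.

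For the first assertion I would argue as follows. Let $\Phi\colon V_1\to V_2$ be $G$-invariant. If $v\in\ker\Phi$ then $\Phi(g\cdot v)=g\cdot\Phi(v)=0$, so $\ker\Phi$ is a $G$-invariant subspace of $V_1$; since $V_1$ is irreducible, $\ker\Phi$ is either $\{0\}$ or all of $V_1$. In the latter case $\Phi=0$ and we are done. In the former case $\Phi$ is injective, and $\Ima\Phi$ is a non-zero $G$-invariant subspace of $V_2$ (equivariance again, since $g\cdot\Phi(v)=\Phi(g\cdot v)\in\Ima\Phi$), so irreducibility of $V_2$ forces $\Ima\Phi=V_2$; thus $\Phi$ is a bijective linear map intertwining the two actions, i.e.\ an isomorphism of representations. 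This step uses nothing about the ground field or about compactness of $G$.

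For the second assertion I would reduce to the first. Let $\phi\colon V\to V$ be $G$-invariant. Here the hypothesis that $V$ is a unitary, hence complex, representation is \emph{essential}: over an algebraically closed field the characteristic polynomial of $\phi$ has a root $\lambda$, so $\phi-\lambda\,\mathrm{id}_V$ is a $G$-invariant endomorphism of $V$ with non-trivial kernel. Applying the first assertion with $V_1=V_2=V$, it must be the zero map, whence $\phi=\lambda\,\mathrm{id}_V$. There is no genuine obstacle in the argument; the only subtlety worth flagging is that the complexness hypothesis cannot be dropped — for instance the irreducible real representation of $\SO(2)$ on $\R^2$ admits rotation by $90^\circ$ as a $G$-invariant endomorphism that is not a scalar multiple of the identity — which is precisely why the statement is phrased for unitary representations.
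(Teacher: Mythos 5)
Your proof is correct and complete: the kernel/image argument for the first assertion and the eigenvalue argument (via $\phi-\lambda\,\mathrm{id}_V$ having nontrivial kernel) for the second are exactly the standard route, and your remark that only the complexness of $V$ — not unitarity per se — is what the scalar conclusion needs, together with the $\SO(2)$ counterexample over $\R$, is accurate. The paper itself states this lemma in the appendix without proof, so there is no argument to compare against; yours would serve as the omitted proof as written.
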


\subsection{ Dual, $\Hom$ and tensor products of representations} \label{sec.adjoint}
If $V_1$ and $V_2$ are representations of a group $G$, 
then the vector space $\Hom(V_1, V_2)$ of linear transformations
$V_1 \to V_2$ has a natural linear action of $G$ given by
the formula $(g\cdot A)(v_1) = g \cdot A(g^{-1}v_1)$.
In particular, if $V$ is a representation of $G$, then $V^* = \Hom(V, \C)$
has a natural action of $G$ given by the formula $(g \cdot f)(v) = f(g^{-1}v)$.

A choice of inner product on $V$ determines an identification of vector spaces
$V =V^*$, given by the formula $v \mapsto \langle \cdot , v \rangle$.
If $V$ is a
unitary representation of $G$ then with the identification
of $V=V^*$ the dual action of $G$ on $V$ is given by the formula
$g \cdot_* v = \overline{g} \cdot v$.
Likewise, if $V_1$ and $V_2$ are two representations
then we can define an action of $G$ on $V_1 \otimes V_2$ by the formula
$g\cdot (v_1 \otimes v_2) = (g \cdot v_1) \otimes (g \cdot v_2)$.

Given two representations spaces $V_1, V_2$ there is an isomorphism of representations
$V_1 \otimes V_2^* \to \Hom(V_2, V_1)$ given by 
the formula $v_1 \otimes f_2 \mapsto \phi$, where the linear transform
$\phi \colon V_2 \to V_1$
is defined by the formula
$\phi(v_2) = f_2(v_2) v_1$.
In particular, we can identify $V \otimes V^*$ with $\Hom(V,V)$.
\section{Compact Lie groups} \label{sec.liegroup}
A {\em compact Lie group} is a compact differentiable manifold which is also a group
and with the property that the multiplication and inverse maps are differentiable. A compact Lie group is a {\em torus} if it is isomorphic
to $(S^1)^n$ for some $n$. A fundamental result in the theory of Lie groups
is that every maximal torus in a compact Lie group has the same dimension.
The {\em rank} of a compact Lie group is the dimension of any maximal
torus. For example the rank of $\SU(n)$ is $n-1$ since the set of
determinant one diagonal matrices with non-zero entries of the form $e^{\iota \theta} \in S^1$ is a maximal torus.

The {\em Lie algebra} of a Lie group $G$ is the tangent space
to the Lie algebra at the identity element. 
A connected compact Lie group is {\em simple} if has no non-trivial connected normal subgroups. If $G$ is a simple Lie group then its Lie algebra $\mathfrak{g}$ is a simple Lie algebra meaning that it has no non-trivial
proper ideals. Given a simple Lie algebra $\mathfrak{g}$ there is a unique
simply connected simple compact Lie group $G$ whose Lie algebra is $\mathfrak{g}$.

In this paper we are concerned with representations of the following
simple groups call the {\em classical Lie groups}.
\begin{enumerate}
\item The group $\SU(n)$ of determinant one $n \times n$ unitary matrices.
$\SU(n)$ has rank $n-1$ and is simply connected.

\item The special orthogonal group $\SO(n)$ 
is the group of determinant-one real $n \times n$ matrices $A$ that
satisfy the condition $AA^t = \Id_n$. This group has
rank $\lfloor n/2 \rfloor$ and is simple if $n \geq 3$. The group
$\SO(2)$ is the circle group and is therefore abelian so it is not simple.
The group $\SO(n)$ is not simply connected. The simply connected group
with the same Lie algebra is the {\em spin group} $\Spin(n)$.

\item The symplectic group $\Sp(2n)$ is the group of $2n \times 2n$ unitary matrices $A$ which satisfy the condition $JA =\overline{A}J$ 
where $J$ is the block diagonal matrix $\begin{pmatrix} 0 & -I\\I & 0\end{pmatrix}.$ The symplectic group is simply connected and has rank $n$.

\end{enumerate}
\end{document}